\newtheorem{thm}{Theorem}
\newtheorem{exm}{Example}
\newtheorem{prp}{Proposition}
\theoremstyle{remark}
\newtheorem{res}{A}
\theoremstyle{definition}
\newtheorem{defi}{Definition}
\begin{document}
%
\title{Curve Reconstruction in Riemannian Manifolds:\\
Ordering Motion Frames}
%
%
%
%

\author{Pratik Shah and Samaresh Chatterji \\
\small{Dhirubhai Ambani Institute of Information and Communication Technology\\
Gandhinagar, India\\}
E-mail: \small{\{pratik\_shah, samaresh\_chatterji\}@daiict.ac.in}
}

%
%

\markboth{Curve Reconstruction in Riemannian Manifolds}{Pratik Shah \makelowercase{et al.}}
%



\IEEEcompsoctitleabstractindextext{%
\begin{abstract}
In this article we extend the computational geometric 
curve reconstruction approach to curves in Riemannian manifolds.  
We prove that the minimal spanning tree, given a sufficiently
dense sample set of a curve, correctly reconstructs the
smooth arcs and further closed and simple curves in
Riemannian manifolds.  
The proof is based on the behaviour of the curve
segment inside the tubular neighbourhood of the curve.
To take care of the local topological changes of the
manifold, the tubular neighbourhood is constructed in
consideration with the injectivity radius of the 
underlying Riemannian manifold.
We also present examples of successfully reconstructed curves
and show applications of curve reconstruction to 
ordering motion frames.
\end{abstract}


\begin{IEEEkeywords}
Video Frame Ordering, Ordering Rotations, Curve Reconstruction, Riemannian Manifold 
\end{IEEEkeywords}}

\maketitle

\section{Introduction}
\IEEEPARstart{T}{he} curve reconstruction problem can be thought of
as \textit{connect the dots}.  
The idea is quite similar to Nyquist's sampling theorem 
for band limited signals in signal processing.
The only difference is in terms of ordering the sample.  
Unlike in the latter case, the ordering is lost when we have
a sample of data points on a curve.
The problem of reconstruction thus demands first to establish
a proper sampling criterion for the curve, next a provable
ordering algorithm based on the sampling criterion to
give a polygonal approximation of the original curve, and finaly 
an interpolating scheme to smooth out the corners.
The nature of problem to be dealt with in this article corresponds to
the first. 
Suppose an object in $\mathbb{R}^3$ 
is in motion and we have captured some frames of this motion. But
these frames are jumbled up, i.e. the ordering is lost. Reconstruct
the original motion given that the frames captured form a 
dense sample set.  

We extend the ideas of
curve reconstruction in Euclidean space - $\mathbb{R}^n$
to \textit{Riemannian manifolds}.  
In this article we make an attempt to extend the
computational geometric curve reconstruction approach to  
curved spaces.

It turns out that the riemannian manifold we are interested in, $SE(3)$, is
endowed with an additional structure
of the group, which makes Riemannian manifold into a Lie Group.  It is a 
well studied object in
physics and mathematics.  Although no bi-invariant metric exist on $SE(3)$, 
together with the Riemannian metric defined on 
it the \textit{exponential map} and further a left invariant distance metric on $SE(3)$ is 
expressed in a closed form.  We give examples of successfully reconstructed 
curves in $SE(2)$
and $SE(3)$.  We also show an application of curve reconstruction in
$SE(2)$ for video frame ordering.
We show that for a densely sampled curves,
\textit{minimal spanning tree}(MST) gives a correct polygonal reconstruction of
curves in Riemannian manifolds.  We also interpolate the ordered point set 
by a partial geodesic interpolation.  Further we propose to interpolate the ordered
point set with de Casteljau algorithm assuming the boundary conditions are
known.

\section{Background}
We begin with a quick review of curve
reconstruction in the plane, keeping the notations and definitions
as general as possible.  
A curve, for our purpose, is the set of image
points of a function $\gamma:[0,1]\to \mathcal{M}$.
More specifically looking at the application at hand, we 
restrict $\mathcal{M}$ to be a differentiable manifold.  We denote 
a curve by a symbol $\mathcal{C}$.			
Since it is an image of a compact interval, $\mathcal{C}$ is a one
dimensional compact manifold.  It is also differentiable if $\gamma$ 
is differentiable.  $\mathcal{C}$ is smooth if $\gamma$ is smooth.
A subclass of curves those are smooth and simple is of  
vital importance in pattern recognition, graphics, image processing and computer vision.  

If $\mathcal{M}$ is $\mathbb{R}^2$ then the problem is of
reconstructing curves in a plane and $\mathcal{C}$ is
planar.  $\mathbb{R}^2$ along with the
standard Euclidean distance metric becomes a metric space.  Naturally
the question arise, is it always possible to have a finite sample 
set $\mathcal{S}\subset \mathcal{C}$ which captures everything
about $\mathcal{C}$?  The answer lies in the fact that 
$\mathcal{C}$ is compact and $\gamma$ is smooth.  To appreciate
it more clearly let us look at a definition of an $\varepsilon$-net.
If $\varepsilon >0$ is given, a subset $\mathcal{S}$ of $\mathcal{C}$
is called an $\varepsilon$-net if $\mathcal{S}$ is finite and its
$\mathcal{C} \subset \cup_{s\in \mathcal{S}} B_{\varepsilon}(s)$, where
$B_{\varepsilon}(s)$ is an open ball in $\mathcal{M}$ with radius
$\varepsilon$.  In other words if $\mathcal{S}$ is finite and its points
are scattered through $\mathcal{C}$ in such a way that each point
of $\mathcal{C}$ is distant by less than $\varepsilon$ from at least
one point of $\mathcal{S}$.  Since $\mathcal{C}$ is compact every
cover will have a finite sub-cover, which shows a possibility of
a finite representative sample set of $\mathcal{C}$.
The concept of $\varepsilon$-net captures the idea of sampling 
criterion very well.

In \cite{gomes94}, based on the uniform sampling criterion an Euclidean
MST is suggested for the reconstruction.
In the initial phase of the development, uniform sampling criterion was
the bottleneck.  The first breakthrough came with the non-uniform sampling
criterion suggested based on the local feature size by \cite{amenta99}.  Unlike the
uniform sampling it samples the curve more where the details are more.
Non-uniform sampling is based on the medial axis of the curve.
The \textit{medial axis} of a curve $\mathcal{C}$ is closure of the set of points
in $\mathcal{M}$ which have two or more closest points in $\mathcal{C}$.  
A simple closed curve in a plane divides the plane into two disjoint regions.
Medial axis can be thought of as the union of disjoint skeletons of the regions 
divided by the curve.  The \textit{Local feature size}, $f(p)$, of a point 
$p\in \mathcal{C}$ is defined as the Euclidean distance from 
$p$ to the closest point $m$ on the medial axis.

$\mathcal{C}$ is \textit{$\varepsilon$-sampled} by a set of sample points $\mathcal{S}$ 
if every $p\in \mathcal{C}$ is within distance $\varepsilon \cdot f(p)$ of a sample $s\in \mathcal{S}$.  
The algorithm suggested in \cite{amenta99}
works based on voronoi and its dual delaunay triangulation.  All delaunay based approaches
can be put under a single formalism, the restricted delaunay complex, 
as shown in \cite{edelsbrunner98}.  Every approach is similar in construction 
and differs only at how it restricts the delaunay complex.  
The crust \cite{amenta99} and further improvements NN-crust \cite{dey2007}
can handle smooth curves.  In some cases \cite{dey2007} it is possible to 
tackle the curves with boundaries and also curves in $\mathbb{R}^d$, $d$-dimensional
euclidean space.  The CRUST and NN-CRUST assume that the sample $\mathcal{S}$ is derived from
a smooth curve $\mathcal{C}$.  The question of reconstructing non smooth cuves have also been studied.
An extensive experimentation with various curve reconstruction algorithms is
carried out in \cite{althaus}.  In \cite{dey99} an extension of NN-CRUST to $\mathbb{R}^d$
is presented.  Which opens up possibilities of extending the existing 
delaunay based reconstruction algorithms to higher dimensional euclidean spaces.
We show an example of a curve in $SE(2)$ reconstructed by NN-CRUST.

\subsection{Organization of the article}
In this paper, we pose the problem of curve 
reconstruction in higher dimensional
curved spaces.  To author's knowledge there are
no reports of efforts made in this direction.
We pose the problem as follows.
Let $\mathcal{M}$ be a riemannian manifold.  
$\mathcal{C}:[0,1]\to\mathcal{M}$ is a smooth, closed 
and simple curve.  Given a finite sample 
$\mathcal{S}\subset\mathcal{C}$ reconstruct the 
$\mathcal{C}$.  Problem involves defining the
appropriate $\mathcal{S}$, suggesting a provable
algorithm for geodesic polygonal approximation and an
interpolation scheme.  

To deal with such objects, 
we first examine the notion of distance on 
surfaces and then move on to more general manifolds
in section \ref{riemetric}.  We
make the Riemannian manifold into a metric space with the help of
the Riemennian inner product.
Next we examine the metric structure of 
$SE(2)$ and $SE(3)$.
With an example in section \ref{medial} we show that 
the medial axis based sampling criterion becomes meaningless in
curved spaces.  We define the dense sample set of a curve on 
Riemannian manifolds and show that in 
section \ref{order} that it is possible to reorder the
dense sample set.  We present successfully reconstructed
curves in $SE(2)$ and $SE(3)$ in section \ref{sim}.
And finally we conclude giving due acknowledgements.
	
\section{Metric Structure on Riemannian Manifolds}
\label{riemetric}
\subsection{$\mathbb{R}^n$ and a surface in $\mathbb{R}^3$}
A curve\footnote{We restrict out attention to smooth curves, i.e. curves which 
are $C^{\infty}$.} 
in space and a curve on the surface are two different entities.
$\mathbb{R}^n$ can be thought of as a
Riemannian manifold with the usual \textit{vector inner product} as the Riemannian
metric.  The tangent space at a point of $\mathbb{R}^n$ is also
an $n$-dimensional vector space.  With the help of the vector inner product
the length of the curve $x:[0,1]\to \mathbb{R}^n$ is defined as:
$L(x) = \int_0^1 \sqrt{\langle x'(t),x'(t)\rangle} dt $.
It turns out that the minimum length curve between two points in the 
Euclidean space is a straight line segment connecting them.  So the 
distance between two points in $\mathbb{R}^n$ is given by 
$d(x,y) = \sqrt{\sum_{i=1}^n (x_i- y_i)^2}$.
With this as a metric $(\mathbb{R}^n,d)$ is a metric space.

Now let us look at a two dimensional surface $\mathcal{M}$
embedded in $\mathbb{R}^3$.  \textit{Two dimensions} here indicate that each 
point $p\in \mathcal{M}$ has a neighbourhood \textit{homeomorphic}\footnote{homeomorphic
here can be replaced by \textit{diffeomorphic} for a differentiable manifold.} 
to a subset of $\mathbb{R}^2$.  In other words, if we associate with each point 
$p\in \mathcal{M}$ a tangent space $T_p \mathcal{M}$
 then the dimension of
$T_p \mathcal{M}$ is two\footnote{For notations and
definitions of basic differential geometric terms, we have referred to 
\cite{gray06}.}
, i.e. two linearly independent vectors are
required to span $T_p \mathcal{M}$.  It is now this tangent space
and the basis vectors of this space which decide the Riemannian metric
for a given surface.  Let us consider a surface patch 
$x(u,v)\subset\mathbb{R}^3$ parametrized by 
$(u,v)\in\mathcal{U}\subset\mathbb{R}^2$.  In this case $x$ is our
manifold $\mathcal{M}$.  Riemannian  metric 
is defined as:
\begin{equation}
	g_{ij} = \left[\begin{array}{cc}
				E=\langle x_u,x_u \rangle & F=\langle x_u,x_v \rangle\\
				F=\langle x_v,x_u \rangle & G=\langle x_v,x_v \rangle 
			\end{array}\right]
\end{equation}
where $x_u$ and $x_v$ are partial derivatives of $x(u,v)$ w.r.t. $u$ and $v$
respectively.  Any vector in $T_p\mathcal{M}$ can be expressed in terms of
these basis vectors $x_u$ and $x_v$.  The inner product for vectors $x_1,x_2 \in T_p\mathcal{M}$
is given by $\langle x_1,x_2 \rangle_g = x_1^T g_{ij} x_2$, where $x_1$ and $x_2$ are
column vectors.
Given a curve $\gamma(t)\in\mathcal{M}$, the length of the curve is
defined as :
\begin{equation}
L(\gamma) = \int_0^1 \sqrt{ \langle \gamma'(t),\gamma'(t)\rangle_g} dt 
\end{equation}

Given $p,q\in\mathcal{M} $, let $\gamma$ be a curve lying in $\mathcal{M}$
with $p, q$ as end points.  Then
\begin{equation}
	d(p,q) = \inf L(\gamma)
\end{equation}
is a valid metric on $\mathcal{M}$.  A $\gamma^*$ for which the
distance between two points is minimized is called a \textit{geodesic} curve
on the manifold $\mathcal{M}$.  As we will see in the following example,
even for a simple looking parametrized surface finding a closed form
expression for the geodesic curve is difficult.  In practice,
$\gamma^*$ is obtained by numerical approximations\cite{kimmel98}.
\begin{exm}
Let $x(u,v)=(u,v,u\cdot v)$ which leads to $x_u=(1,0,v)$, $x_v=(0,1,u)$ and
$E = 1 + v^2 $, $F = u \cdot v$ and $G = 1+u^2$.  A curve in $x(u,v)$ is,
$\gamma(t)=x(u(t),v(t))=(u(t),v(t),u(t)\cdot v(t))$.  The length of the
curve $\gamma(t)$, $t\in[t_0,t_1]$ is $\int_{t_0}^{t_1} \sqrt{E u'^2 + 2\cdot F\cdot u'\cdot v'+G v'^2} dt$,
where $u'$ and $v'$ are $du/dt$ and $dv/dt$ respectively.

If we try to minimize the length function by Euler-Lagrange minimization
we get for each of the co-ordinates a second order ordinary non-linear differential equation 
to solve.  In this example these equations are:
\begin{eqnarray}
\frac{d^2u}{dt^2}+2\frac{v}{1+u^2+v^2}\frac{du}{dt}
            \frac{dv}{dt}=0 \\
\frac{d^2v}{dt^2}+2\frac{u}{1+u^2+v^2}\frac{du}{dt}
            \frac{dv}{dt}=0
\end{eqnarray}
Let the boundary points, the points between which we are trying to find the
geodesic distance, be $(1,1,1)$ and $(-1,-1,1)$.  We solve the BVP for the above
system of equations with matlab boundary value solver.
The resultant geodesic and the initial guess are shown in the Figure.\ref{bilinear}.
\begin{figure}[h]
    \begin{center}
        \includegraphics[scale=0.45]{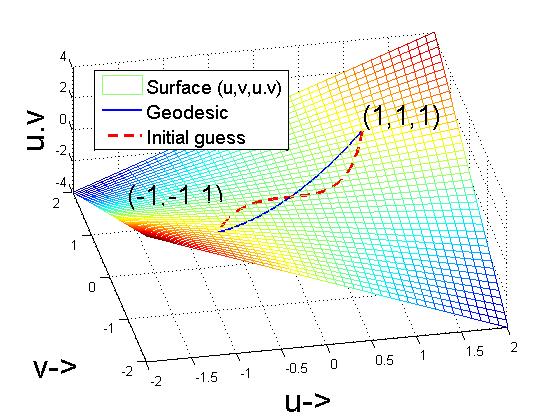}
    \end{center}
    \label{bilinear}
    \caption{A bilinear surface and a geodesic}
\end{figure}
\end{exm}

\subsection{Euclidean Motion Groups}
	\label{metric_emg}
Consider an object in plane undergoing a rigid body euclidean motion.
This motion can be decomposed into a rotation with respect to the
center of mass of the object and a translation of the center of mass
of the object.  All possible configurations of an object in plane
can be represented by $(\theta,u,v)$(i.e. orientation of the 
principle axis and the co-ordinates of the center of mass of the object), where
$0\leq\theta\leq 2\pi$ and $(u,v) \in \mathbb{R}^2$.  Let all such 
configurations form a set $\mathbb{S}$.  It is rather intuitive to define a
metric on $\mathbb{S}$ so as to compare two configurations of an
object.  If $A_1=(\theta_1,u_1,v_1)$ and $A_2=(\theta_2,u_2,v_2)$ be two 
configurations in $\mathbb{S}$ then it is easy to verify that
\begin{equation}
d(A_1,A_2):= \sqrt{a(\theta_1-\theta_2)^2+b(u_1-u_2)^2+b(v_1-v_2)^2}
\label{se2_metric}
\end{equation}
is a valid metric on $\mathbb{S}$ corresponding to 
the riemannian inner product $\langle A_1,A_1\rangle_R = A_1^T R A_1$, and 
$R = \left[\begin{array}{cc} a & 0 \\ 0 & bI_2 \end{array}\right]$ a 
positive definite matrix.  
Moreover for given $A_1,A_2$, left
composition with $A\in\mathbb{S}$, i.e $A(A_1)=(\theta+\theta_1,u+u_1,v+v_1)$,
the above defined metric leads to 
$d(A_1,A_2) = d(A(A_1),A(A_2))$.  Hence we have a left invariant metric
defined on $\mathbb{S}$.  Physical interpretation of the left invariance
is the freedom in choice of the inertial reference frame.  
The matrix representation of $\mathbb{S}$,
the euclidean motion group, is denoted by $SE(2)$.  And a typical
element of $SE(2)$ is made up of a rotation matrix and a translational 
vector.  Correspondence between $\mathbb{S}$ and $SE(2)$ is given by
\[
(\theta,u,v)\Leftrightarrow 
\left[\begin{array}{ccc}
\cos{\theta} & \sin{\theta} & u\\
-\sin{\theta} & \cos{\theta} & v\\
0 & 0 & 1 \end{array}\right]
\]
A typical curve between two configurations in $SE(2)$ 
and the geodesic segment from $A_1$ to $A_2$ are show 
in Fig.\ref{se2_curve_1}.

\begin{figure}[h]
	\begin{center}
		\includegraphics[scale=.5]{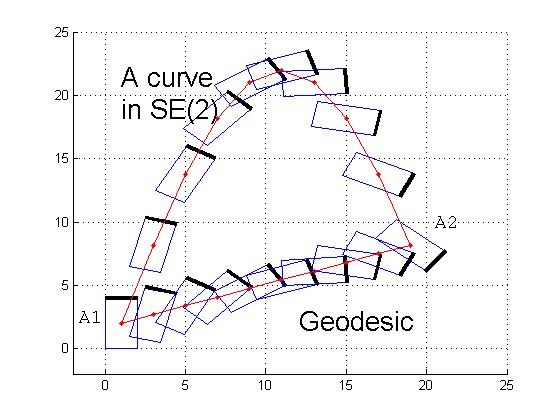}
	\end{center}
	\caption{Comparision of a curve and a geodesic in $SE(2)$ between two configurations $A_1$ and $A_2$.}
	\label{se2_curve_1}
\end{figure}
$SE(2)$ is explored in the domain of image processing for segmentation
in object tracking where
one is interested in constrained evolution of the curve under the action
of $SE(2)$, a lie group. 

In general the group of rigid body motions in 
$\mathbb{R}^n$ is the semi-direct product \cite{selig05} of the special
 orthogonal group with
$\mathbb{R}^n$ itself.
\[SE(n)= SO(n)\ltimes \mathbb{R}^n\]
Unlike $\mathbb{R}^2$ the rotations in $\mathbb{R}^3$ are not commutative.
And that reflects in the group composition of $SO(3)$, $R_1 R_2\neq R_2 R_1$,
$R_1,R_2 \in SO(3)$.
The product of two rigid body motions $(R_1,d_1),(R_2,d_2) \in SE(3)$ is given by
$(R_2,d_2)(R_1,d_1) = (R_2R_1,R_2d_1+d_2)$.
The matrix representation of elements of $SE(3)$
\begin{equation}
SE(3) =  \lbrace A\vert A = \left[\begin{array}{cc} R & d\\0 & 1\end{array}\right],
R\in SO(3), d \in \mathbb{R}^3 \rbrace
\end{equation}

The tangent space at the group identity in $SO(3)$ and $SE(3)$ are
the lie algebras $so(3)$ and $se(3)$ respectively.
\begin{eqnarray}
so(3) = \lbrace [\omega]|[\omega] \in \mathbb{R}^{3\times 3}, [\omega]^T = -[\omega] \rbrace,\\
se(3)=\lbrace S = \left[\begin{array}{cc} [\omega] & v\\ 0 & 0 \end{array}\right], 
[\omega]\in so(3), v \in \mathbb{R}^3  \rbrace
\end{eqnarray}
Where $[\omega]$ is a skew symmetric matrix \cite{gray06} corresponding to the vector 
$\omega=[\omega_x, \omega_y, \omega_z]\in\mathbb{R}^3$.  The $||\omega||$ gives
the amount of rotation with respect to the unit vector along $\omega$.
The exponential map is a diffeomorphism \cite{zefran95} connecting the lie algebra to the 
lie group.  The $\exp : se(3)\to SE(3)$ is given by the usual matrix
exponential as $\exp (S) = \sum_{n=0}^\infty \frac{S^n}{n!}$.

Consider a rigid body moving in free space.  We fix any inertial reference 
frame $\{B\}$ at $o$ and a frame $\{E\}$ to the body at some point $o'$ of
the body as shown in Fig.\ref{frames_se3}.  At each instance the configuration of the rigid
body is described via a transformation matrix, $A\in SE(3)$, corresponding to
the displacement from frame $\{B\}$ to frame $\{E\}$.

\begin{figure}[h]
	\begin{center}
		\includegraphics[scale=.3]{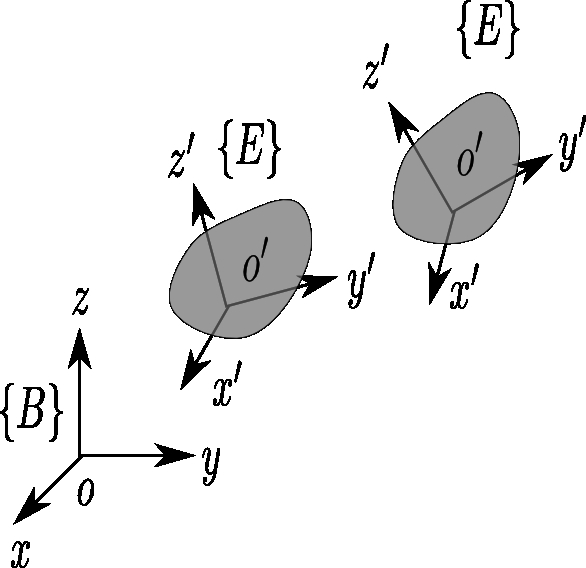}
	\end{center}
	\caption{Inertial frame $\{B\}$ and body fixed frames $\{E\}$}
	\label{frames_se3}
\end{figure}

So a rigid body motion becomes a curve in $SE(3)$, let $A(t)$ be such a curve
given by $A(t) : [-c,c]\to SE(3), A(t)=\left[\begin{array}{cc} R(t) & d(t)\\0 & 1\end{array}\right]$.
The lie algebra element $S(t)\in se(3)$  can be identified with the tangent
vector $A'(t)$ at an arbitrary t by:
\begin{equation}
S(t) = A^{-1}(t)A'(t) = \left[\begin{array}{cc} [\omega](t) & v(t)\\ 0 & 0 \end{array}\right]
\end{equation}
The $\omega$ physically corresponds to the angular velocity of the body, while
$v$ is the linear velocity of the origin $O'$.  
Let us assign a riemannian metric 
$g = \left[ \begin{array}{cc} \alpha I_3 & 0\\ 0 & \beta I_3\end{array}\right]$
over $SE(3)$ as prescribed in \cite{park95}.  And so for $V = (\omega,v)\in se(3)$,
$\langle V,V \rangle_g = \alpha \omega^T \omega + \beta v^T V$.
It was proved in \cite{zefran95} that the analytic expression for the geodesic between 
two configurations $A_1$ and $A_2$ in $SE(3)$,
with $g$ as riemannian metric,  is given by;
\begin{eqnarray}
R(t) = R_1 \exp([\omega_0]t)\\
d(t) = (d_2-d_1)t+d_1
\end{eqnarray}
where $[\omega_0]=log(R_1^T R_2)$ and $t\in [0,1]$.  
The path is unique for $Trace(R_1^T R_2)\neq -1$.
And the distance between two configuration in $SE(3)$ is given by 
\begin{equation}
d(A_1,A_2)=\sqrt{\alpha\Vert \log(R_1^{-1} R_2) \Vert^2+\beta\Vert d_2-d_1 \Vert^2}.
\label{metric_se3}
\end{equation}
All the formulas required for computing $\exp$ and $\log$ maps are given in the 
Appendix.\ref{ap1} for completeness.
\begin{exm}
Consider two configurations $A_1$ and $A_2$, as
shown in Fig.\ref{exp_se3_exp2}, given by
vectors $(\omega_1,v_1)$ and $(\omega_2,v_2)$ respectivel, where
$\omega_1=\frac{\pi}{4}\left[\begin{array}{ccc}1 & 0 & 0\end{array}\right]$,
$v_1=\left[\begin{array}{ccc}-6 & 0 & 0\end{array}\right]$,
$\omega_2=\frac{\pi}{2}\left[\begin{array}{ccc}1 & 1 & 0\end{array}\right]$
and
$v_2=\left[\begin{array}{ccc} 0 & 6 & 2\end{array}\right]$.
\begin{figure}[h]
	\begin{center}
		\includegraphics[scale=.5]{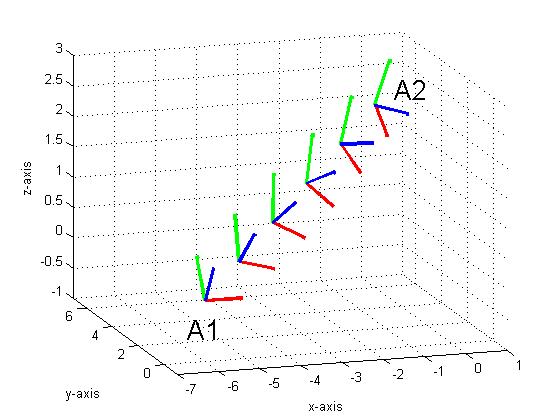}
	\end{center}
	\caption{A geodesic between $A_1,A_2\in SE(3)$.}
	\label{exp_se3_exp2}
\end{figure}
\end{exm}
$SE(3)$ is used extensively in robotics for path planning
and motion planning of robots.  It is also useful in 
computer vision and graphics.

Once the Riemannian metric is identified we can construct a distance metric on the manifold.  
With the distance metric $d(\cdot,\cdot)$(corresponding to the geodesic path) 
defined on the Riemannian manifold we are now ready to
talk about the medial axis and the sampling criterion for a curve on the manifold. 

\section{Medial Axis, Dense sample and Flatness}
	\label{medial}
We proceed by revisiting the definition of the medial axis stated
previously.  
Let $\mathcal{M}$ be a Riemannian manifold and
$d(\cdot,\cdot):\mathcal{M}\times \mathcal{M}\to \mathbb{R}$
be the corresponding distance metric.
\begin{defi}
The medial axis $M$ of a curve $\mathcal{C}\subset \mathcal{M}$,  
is the closure
of the set of points in $\mathcal{M}$ that have at least two closest
points in $\mathcal{C}$.  
\end{defi}
Fig.\ref{medial_ellipse} shows examples of
medial axis of closed curves on a half cylinder and in a plane.
\begin{figure}[h]
	\begin{center}
		\includegraphics[scale=.22]{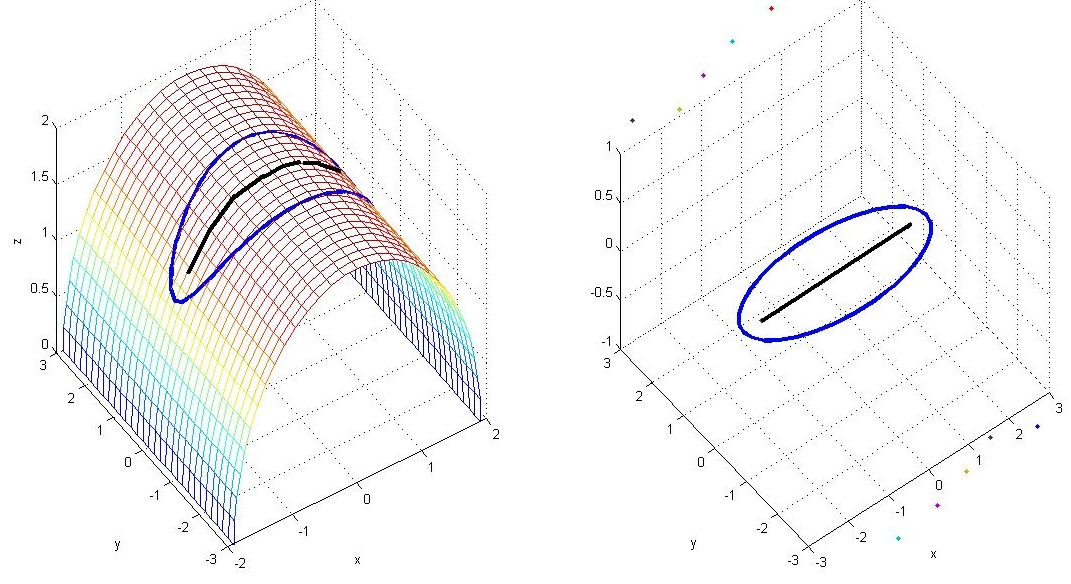}
	\end{center}
	\caption{Medial axis of a curve on a surface and a curve in a plane}
	\label{medial_ellipse}
\end{figure}
It should be noted that the medial axis, as defined above,
is a subset of the underlying manifold in which the curve lies.
A curve embedded in a riemannian manifold and embedded in $\mathbb{R}^3$
will have different medial axes.
The open disc (ball) 
of radius $\varepsilon > 0$ in $\mathcal{M}$ with $s\in \mathcal{M}$ as a center 
is defined as $S_{\varepsilon}(s)=\{x\in\mathcal{M}\vert d(s,x)<\varepsilon\}$.
In the same manner $B_{\varepsilon}(s)=\{x\in\mathcal{M}\vert d(s,x)\leq \varepsilon\}$
is a closed disc (ball) in $\mathcal{M}$ with radius $\varepsilon$ and the center $s$.
The set $\partial B_{\varepsilon}(s) = \{ x\in \mathcal{M} \vert d(x,s) = \varepsilon \}$
is the boundary of $B_{\varepsilon}(s)$.

\begin{defi}
At a point $p$ on the curve $\mathcal{C}$ the local feature size $f(p)=d(p,M)$.  Where
$d(p,M)=\inf \{d(p,m),\forall m\in M\}$.  
\end{defi}
The local feature size at a point on the curve captures the behaviour of the
curve in the neighbourhood of that point.  
In practice for arbitrary curves it is difficult to identify the 
medial axis.  Looking at the construction of the \textit{Voronoi} diagram\cite{rourke98} 
for a given sample points of a curve the
Voronoi vertices do capture the behaviour of the medial axis of the
sampled curve.
And so for a densely sampled curve the Voronoi vertices for these samples
are taken to be the approximate of the medial axis of the given curve.
It is computationally challenging to construct voronoi diagrams on curved
spaces \cite{leibon00}.
\subsection{Dense sampling}		
A tubular neighbourhood for a curve in a plane is defined as the subset of the 
plane such that every point of the subset belongs to exactly one line
segment totally contained in the subset and normal to the curve.  And a disk
centred on the curve contained in a tubular neighbourhood of the curve is called a 
tubular disk.  Let us generalize this definition to curves in manifolds.  We will
also define the notion of a dense sample of a curve in manifold based on 
the tubular neighbourhood.
\begin{defi}
Let $\mathcal{C}\subset \mathcal{M}$ be a smooth curve.  Consider segments of geodesics 
that are normal to $\mathcal{C}$ and start in $\mathcal{C}$. If  $\mathcal{C}$ is compact, 
then there exists an $\varepsilon>0$ such that no two segments of length  $\varepsilon$
and starting at different points of $\mathcal{C}$ intersect \cite{spivak1}. 
The union of all such segments 
of length $\varepsilon$ is an open neighbourhood $T$ of $\mathcal{C}$, and is called a 
tubular neighbourhood of $\mathcal{C}$.
\end{defi}

We denote the open segment with center $p\in\mathcal{C}$ and
radius $\varepsilon$ in the normal geodesic segment of $\mathcal{C}$ at $p$ by $N_{\varepsilon}(p)$.
Revisiting the definition of the tubular neighbourhood: the union 
$N_{\varepsilon}(\mathcal{C})=\cup_{p\in\mathcal{C}} N_{\varepsilon}(p)$ is called
a tubular neighbourhood of radius $\varepsilon$ if it is open as a subset
of $\mathcal{M}$ and the map 
$F:\mathcal{C}\times (-\varepsilon,\varepsilon)\to N_{\varepsilon}(\mathcal{C})$ is a 
diffeomorphism.  This interpretation is the outcome of result from \cite{carmo92}.
Let $\mathcal{C}\subset \mathbb{R}^2$, be a simple closed smooth curve.
Existence of the tubular neighbourhood is evident from the compactness of the
curve in $\mathbb{R}^2$.  We show something more about the value of 
$\varepsilon$ in next proposition.

\begin{prp}
If $N_{\varepsilon}(\mathcal{C})$ is a tubular neighbourhood of $\mathcal{C}$ then 
$\varepsilon < \frac{1}{k}$.  Where $k = \max{k(p)}, p\in\mathcal{C}$ and 
$k(p)$ is the curvature of the curve at point $p$.
\end{prp}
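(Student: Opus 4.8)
The plan is to reduce the statement to a single Jacobian computation for the map that parametrizes the tubular neighbourhood. First I would parametrize $\mathcal{C}$ by arc length, $\gamma:\mathbb{R}/L\mathbb{Z}\to\mathbb{R}^2$, and equip it with the Frenet frame $(T(s),N(s))$ and signed curvature $\kappa(s)$, so that $\gamma'=T$, $T'=\kappa N$, $N'=-\kappa T$, and $|\kappa(s)|=k(s)$; note that since $\mathcal{C}$ is closed its total curvature cannot vanish, so $k=\max_s k(s)>0$. By hypothesis $N_\varepsilon(\mathcal{C})$ carries the diffeomorphism $F(s,u)=\gamma(s)+u\,N(s)$, $u\in(-\varepsilon,\varepsilon)$, and the whole content of the proposition is a bound on how large $\varepsilon$ can be before $F$ ceases to be a diffeomorphism.

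Next I would compute $dF$. Differentiating and using the Frenet relations, $F_s=\gamma'(s)+u\,N'(s)=(1-u\kappa(s))\,T(s)$ and $F_u=N(s)$, so in the orthonormal moving frame $(T(s),N(s))$ the matrix of $dF_{(s,u)}$ is $\begin{pmatrix}1-u\kappa(s)&0\\0&1\end{pmatrix}$ and $\det dF_{(s,u)}=1-u\kappa(s)$. Because $F$ is a diffeomorphism onto the open set $N_\varepsilon(\mathcal{C})$, it is in particular a local diffeomorphism everywhere on its domain, so $\det dF$ is nowhere zero; as it equals $1$ at $u=0$ and the domain $\mathcal{C}\times(-\varepsilon,\varepsilon)$ is connected, in fact $1-u\kappa(s)>0$ for all $s$ and all $|u|<\varepsilon$.

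Now I would specialize to a point $p=\gamma(s_0)$ at which the curvature is maximal; after possibly reversing the orientation of the normal, $\kappa(s_0)=k>0$. Positivity of $u\mapsto1-ku$ on $(-\varepsilon,\varepsilon)$ forces $1/k\ge\varepsilon$, i.e. $\varepsilon\le 1/k$. To sharpen this to the strict inequality, I would argue that $u=1/k$ is exactly the value for which $F(s_0,u)$ would land on the centre of curvature $c=\gamma(s_0)+\tfrac1k N(s_0)$ of $\mathcal{C}$ at $p$ — a focal point of $\mathcal{C}$, where the evolute meets the normal line at $p$. If $\varepsilon$ were equal to $1/k$, then $c\in\overline{N_\varepsilon(\mathcal{C})}\setminus N_\varepsilon(\mathcal{C})$, it is approached along $u\uparrow\varepsilon$ by $F(s_0,u)$, and one uses the degeneracy $\det dF(s_0,1/k)=0$ together with continuity of the foot-of-normal construction near $p$ to show $F$ cannot remain a diffeomorphism onto an open set all the way out to radius $1/k$; this contradicts the hypothesis, so $\varepsilon<1/k$.

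I expect the Jacobian computation and the inequality $\varepsilon\le1/k$ to be entirely routine — they are just the first variation of the normal exponential map in two dimensions. The one genuinely delicate point is the last step: converting the boundary degeneracy $\det dF=0$ at $u=1/k$ into an actual failure of the tubular-neighbourhood property requires the focal-point/evolute picture rather than the linear algebra alone, and one has to be careful about whether the normal segments of length $\varepsilon$ are taken open or closed.
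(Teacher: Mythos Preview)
Your approach is essentially the same as the paper's: both compute the differential of the normal map $F(s,u)=\gamma(s)+u\,N(s)$, use that a diffeomorphism has nonsingular differential to get $1-u\kappa(s)\neq 0$, and then invoke connectedness of the domain together with the value at $u=0$ to obtain $1-u\kappa(s)>0$. The only differences are cosmetic --- the paper computes just the $s$-partial of $F$ rather than the full Jacobian, and it passes directly to the strict inequality $\varepsilon<1/k$ without the focal-point discussion you (rightly) flag as the delicate step.
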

\begin{proof}
Let us define a curve $\alpha(s)$ in $\mathbb{R}^2$ by
\[
\alpha(s)=F(\mathcal{C}(s),t)=\mathcal{C}(s)+t N(\mathcal{C}(s)),
\]
for a fixed $t\in(-\varepsilon,\varepsilon)$ such that $\alpha(0)=p$. $N(p)$ 
is the unit normal to the curve $\mathcal{C}$
at $p$.  This new curve belongs to the open set $N_{\varepsilon}(\mathcal{C})$
and 
\begin{eqnarray}
\alpha(0) = p+t N(p) \\
\alpha'(0) = \mathcal{C}'(0)+t \mathcal{C}'''(0) \\
\alpha'(0) = (1 - t k(p)) \mathcal{C}'(0) = (dF)_{(p,t)}(\mathcal{C}'(0))
\end{eqnarray}
Since $F : \mathcal{C} \times (-\varepsilon,\varepsilon) \to \mathbb{R}^2$
is a diffeomorphism when restricted to $\mathcal{C}\times(-\varepsilon,\varepsilon)$,
we have that $(dF)_{(p,t)}(\mathcal{C}'(0))$ is a non-null vector,i.e. 
$1-t k(p) \neq 0$.  But $(-\varepsilon,\varepsilon)$ is connected and 
$1-t k(p)>0$  for $t=0$.  Thus $1-t k(p)>0$ on 
$\mathcal{C}\times (-\varepsilon,\varepsilon)$.  Now if we find out 
$k = \max k(p), p\in \mathcal{C}$ then $1- t k > 0 $.
And we have $\varepsilon = t < \frac{1}{k}$.
\end{proof}

\begin{defi}
A finite sample set $\mathcal{S}\subset \mathcal{C}$ is called a \textit{uniform
$\varepsilon-$sample} for some $\varepsilon>0$ if for any two consecutive sample 
points $r,s\in \mathcal{S}$, $r \in B_{\varepsilon}(s)$. 
\end{defi}

\begin{defi}
A uniform $\varepsilon$-sample $\mathcal{S}$ of a curve $\mathcal{C}\subset\mathcal{M}$ 
is dense if there is a real number
$\varepsilon>0$ such that $\cup_{s\in\mathcal{S}} B_{\varepsilon}(s)$,i.e. the union of 
the closed disks of radius $\varepsilon$ centred at the sample points $s\in\mathcal{S}$, forms a 
tubular neighbourhood of $C$.
\end{defi}

\begin{prp}
For plane curves if $\varepsilon < \min_{p\in \mathcal{C}}{f(p)}$ then
the uniform $\varepsilon$-sample $\mathcal{S}$ of curve $\mathcal{C}$ is a dense sample.
\end{prp}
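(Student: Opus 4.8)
The plan is to show that the closed region $T:=\bigcup_{s\in\mathcal{S}}B_{\varepsilon}(s)$ contains $\mathcal{C}$ in its interior, is contained in a genuine tubular neighbourhood $N_{\varepsilon}(\mathcal{C})$, and that every point of $T$ lies on exactly one normal geodesic segment of $\mathcal{C}$, so that $T$ ``forms a tubular neighbourhood'' in the sense of the definition. Throughout write $\rho:=\min_{p\in\mathcal{C}}f(p)$; this minimum is attained and strictly positive since $\mathcal{C}$ is compact and $f$ is continuous and positive, so the hypothesis $\varepsilon<\rho$ is meaningful.

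\emph{Step 1: the $\varepsilon$-tube is genuinely tubular.} First I would establish that $N_{\varepsilon}(\mathcal{C})$ is a bona fide tubular neighbourhood whenever $\varepsilon<\rho$. The crucial observation is that a point $x$ with $d(x,\mathcal{C})<\rho$ cannot have two distinct closest points on $\mathcal{C}$: otherwise $x$ lies on the medial axis $M$, and for any closest point $p$ of $x$ we would get $f(p)=d(p,M)\le d(p,x)<\rho$, contradicting $\rho=\min_q f(q)$. Hence such an $x$ has a unique foot point $p_x\in\mathcal{C}$, and a first-variation argument shows the minimizing segment $[x,p_x]$ meets $\mathcal{C}$ orthogonally, so $x\in N_{d(x,\mathcal{C})}(p_x)$. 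Moreover $p$ stays the closest point of $p+tN(p)$ for $|t|<f(p)$, so for $|t|<\varepsilon$ the nearest-point projection $x\mapsto p_x$ is a two-sided inverse of $F:\mathcal{C}\times(-\varepsilon,\varepsilon)\to N_{\varepsilon}(\mathcal{C})$, giving that $F$ is a bijection; that $F$ is also a local diffeomorphism is precisely the computation of the previous proposition, which applies because $\varepsilon<\rho\le 1/k$ --- the local feature size never exceeding the radius of curvature, so $f(p)\le 1/k(p)$ and hence $\rho=\min_p f(p)\le\min_p 1/k(p)=1/k$. A bijective local diffeomorphism is a diffeomorphism onto the open set $N_{\varepsilon}(\mathcal{C})$.

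\emph{Step 2: $T$ is sandwiched between $\mathcal{C}$ and $N_{\varepsilon}(\mathcal{C})$ and covers $\mathcal{C}$.} If $x\in T$ then $d(x,\mathcal{C})\le d(x,s)\le\varepsilon<\rho$ for the sample $s$ with $x\in B_{\varepsilon}(s)$, so by Step 1 $x$ lies on the unique normal segment through $p_x$, of half-length $d(x,\mathcal{C})\le\varepsilon$; hence $T\subset N_{\varepsilon}(\mathcal{C})$ and every point of $T$ sits on exactly one normal geodesic segment of $\mathcal{C}$ contained in $N_\varepsilon(\mathcal{C})$. To see that $\mathcal{C}$ lies in the interior of $T$, I would use that $\mathcal{S}$ is a uniform $\varepsilon$-sample: between two consecutive samples $r,s$ the chord has length at most $\varepsilon<2\rho$, and the chord--arc estimate for curves of reach $\ge\rho$ bounds the length of the arc of $\mathcal{C}$ between them by $2\rho\arcsin(\varepsilon/2\rho)<2\varepsilon$. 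Consequently each point of that arc is within Euclidean distance $<\varepsilon$ of $r$ or of $s$, hence in the open disk $S_{\varepsilon}(r)$ or $S_{\varepsilon}(s)$, which lies in the interior of $T$; together with the sample points themselves, this covers $\mathcal{C}$ by interiors of disks. A compactness argument then even yields a smaller honest tube $N_{\varepsilon'}(\mathcal{C})$ inside the interior of $T$ for some $\varepsilon'\in(0,\varepsilon]$, which is exactly what it means for $\mathcal{S}$ to be dense.

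\emph{Main obstacle.} The delicate point is Step 1 --- upgrading $F$ from a local to a \emph{global} diffeomorphism. Concretely one must verify that $p$ remains the \emph{unique} closest point of $p+tN(p)$ for every $|t|<f(p)$, equivalently that the distance to $\mathcal{C}$ equals $|t|$ along each normal until the medial axis is first reached; this is what makes the nearest-point projection a genuine two-sided inverse of $F$. The orthogonality of the foot-point segment and the chord--arc estimate are standard once the reach bound $\varepsilon<\rho$ is in force, and the remaining verifications are routine bookkeeping with the definitions of $f$, $M$, and of a uniform $\varepsilon$-sample.
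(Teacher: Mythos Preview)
Your argument is correct, but you should know that the paper does not prove this proposition at all: its entire proof reads ``By the definition of $f(p)$.'' The authors are content to observe that the medial axis is precisely the set where the nearest-point projection to $\mathcal{C}$ fails to be single-valued, so staying within distance $\min_p f(p)$ of $\mathcal{C}$ automatically keeps one in the region where normal segments do not collide; they then regard the denseness as immediate from the definitions and do not unpack the diffeomorphism property, the chord--arc bound, or the containment $\mathcal{C}\subset\mathrm{int}\,T$.

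What you have written is therefore not an alternative route but a genuine completion of a proof the paper leaves as a tautology. A couple of minor remarks: (i) for the appeal to the previous proposition you only need the global inequality $\rho=\min_p f(p)\le 1/\max_p k(p)$, which is Federer's reach bound; the pointwise claim $f(p)\le 1/k(p)$ you state is also true (the centre of the osculating circle lies in $\overline{M}$), but is stronger than necessary. (ii) Your chord--arc step uses $2\rho\arcsin(\varepsilon/2\rho)<2\varepsilon$, which reduces to $\arcsin x<2x$ on $(0,1]$ and is indeed valid, so the covering argument goes through. The ``main obstacle'' you flag---that the foot-point map really is a global inverse of $F$---is exactly the content of the medial-axis definition: if $p+tN(p)$ had a different closest point for some $|t|<f(p)$, then somewhere along that normal segment two closest points would coexist, placing that intermediate point on $M$ at distance $<f(p)$ from $p$, a contradiction. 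So your identified obstacle dissolves directly ``by the definition of $f(p)$,'' which is presumably all the authors had in mind.
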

\begin{proof}
By the definition of $f(p)$.
\end{proof}

Before we proceed to the main theorem we will discuss few observations in the next
section.  We show by an example how the medial axis based sampling 
fails due to the curvature of the underlying riemannian manifold.
We also propose to work within the injectivity radius of the manifold
to avoid such a problem.
\subsection{Observations and a Counter Example}
The first two observations are encouraging.  And the 
counter example to these to in next section helps in
identifying a conservative sampling condition.  We know 
that to form a dense sample of a curve in $\mathbb{R}^n$ it
is required to sample with the $\varepsilon_1 < \min_{p\in\mathcal{C}}f(p)$.
The curve and corresponding $\varepsilon_1$ is shown in Fig.\ref{obs1_1}.  
\begin{figure}[ht]
\centering
\subfigure[]{
\includegraphics[scale=.25]{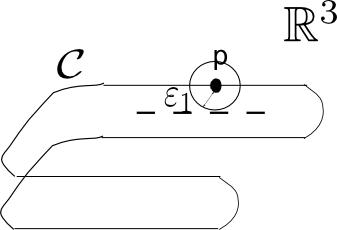}
\label{obs1_1}
}
\subfigure[]{
\includegraphics[scale=.25]{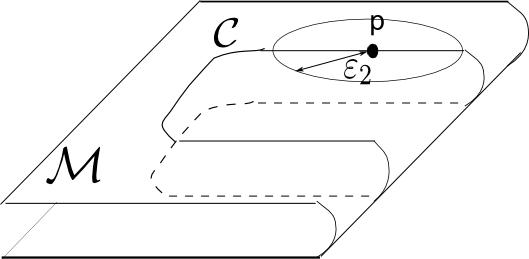}
\label{obs1_2}
}
\caption[obs1]{
	\subref{obs1_1} 
	A curve $\mathcal{C}\in\mathbb{R}^3$ and the 
	part of medial axis near $p\in \mathcal{C}$.
	$\varepsilon_1$  is the distance of the point
	$p\in \mathcal{C}$ from the medial axis of the
	curve in space.
	\subref{obs1_2} 		
	The same curve $\mathcal{C}$ on a surface $\mathcal{M}$
	and the medial axis distance $\varepsilon_2$ from 
	the point $p\in\mathcal{C}$ to the medial axis of the
	curve on the surface.}
\label{observation1}
\end{figure}
Whereas if the same curve
is embedded in a surface, as shown Fig.\ref{obs1_2}, the required
$\varepsilon_2$ needs to be evaluated on the surface.  It turns out that
$\varepsilon_2<\varepsilon_1$.  Let us look at another example.
\begin{figure}[ht]
\centering
\subfigure[]{
\includegraphics[scale=.3]{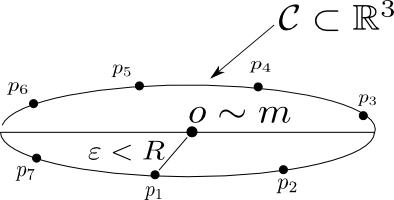}
\label{obs2_1}
}
\subfigure[]{
\includegraphics[scale=.3]{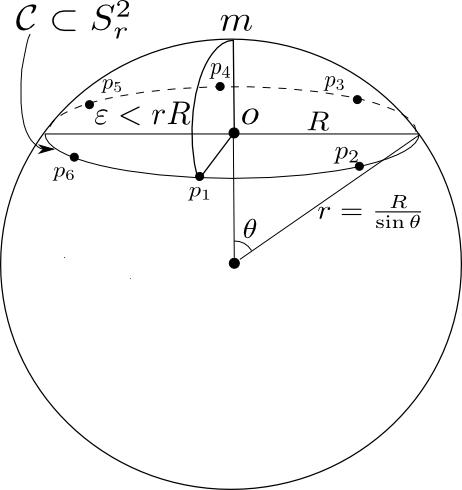}
\label{obs2_2}
}
\subfigure[]{
\includegraphics[scale=.3]{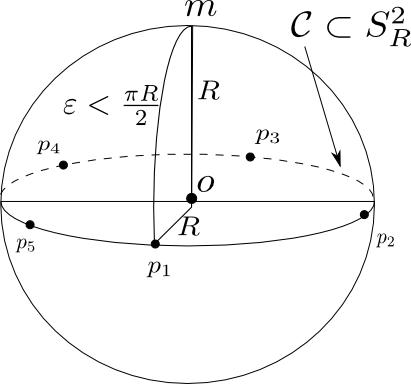}
\label{obs2_3}
}
\caption[obs2]{
	\subref{obs2_1} Circle with radius R is lying in space 
	\subref{obs2_2} Circle resting on a sphere of radius $r=\frac{R}{\sin \theta}$		
	\subref{obs2_3} Circle is the great circle the sphere of radius $R$}
\label{observation2}
\end{figure}
A circle in a $xy$-plane in $\mathbb{R}^3$ can be 
thought of as some latitude on a sphere of radius 
$r\geq \frac{L}{2\pi}$, where $L$ is the length of the
circle.  In both the cases,
circle on a plane and circle on a sphere,
the sampling required for correct reconstruction is different.
On sphere we need less dense sample set as compared to
the plane.  In fact as we increase the radius $r$ we
need denser and denser sample set for correct reconstruction
and its limiting case, $r\to \infty$, is the plane.  
In $\mathbb{R}^3$ the usual 
euclidean metric is carried over to the points of
the circle.  In case of sphere the shortest path
between two points is always along the great
circle passing through these two points and the 
length of the segment which is shorter is the distance
between two points on sphere.  With this distance
metric defined,
sphere becomes a metric space.  And the points of
the circle on sphere are endowed with this metric.  
The points of this circle
on a sphere are more structured then the points of
the same circle in the space.  The additional knowledge of 
the underlying surface adds up to the ordering
relation between points of the circle.  Since we know
the surface we know the tangent space and that reduces
the effort in ordering the sample points.

Interestingly, when generalized to the curves on manifolds, the sampling criterion based
only on the medial axis becomes meaningless. 
As an example let us look at a circle of radius one on the surface shown
in Fig.\ref{medial_violate}.
\begin{figure}[h]
	\begin{center}
	\includegraphics[scale=.45]{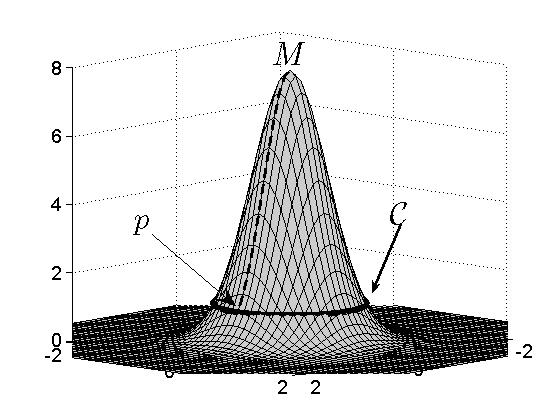}
	\end{center}
	\caption{A circle $\mathcal{C}$ and the normal geodesic from a point $(1,0,1.0629)$ to
	$M$}
	\label{medial_violate}
\end{figure}
The medial axis of the circle on the given surface is the point $M=(0,0,0)$.  For any point 
on the circle, the distance from the medial axis turns out to be larger than the length of the
circle itself. 
Consider the limiting case of this surface, a cylinder, suppose the circle is on the cylinder.  The 
medial axis point does not exist.

The above phenomenon can be understood clearly if we look at the cut locus of the point
$p\in\mathcal{M}$.  Following can be considered as the defining property of
the cut locus of a point on the manifold.
If $\gamma (t_0)$ is the cut point of $p=\gamma (0)$ along
the geodesic arc $\gamma$ then either $\gamma (t_0)$ is the first conjugate point
of $\gamma (t_0)$ or there exists a geodesic $\sigma = \gamma$ from 
$p$ to $\gamma (t_0)$ such that $l(\sigma)=l(\gamma)$(lengths of
$\sigma$ and $\gamma$ are equal).  

For example if $\mathcal{M}$ is a sphere $S^2$ and $p\in S^2$ then the
cut locus of $p$ is its antipodal point.  And if we consider the sphere
of radius $R$ the distance of point $p$ from its cut locus is
$\pi R$.  Whereas the distance of the point $p$ 
on the circle in Fig.\ref{obs2_3} to the medial axis $M$
is $\frac{\pi R}{2}$.
Now coming back to the counter example Fig.\ref{medial_violate} we observe that 
the distance of the $p$ to its cut locus $d(p,C_m(p))$
is less then the distance to the medial axis $M$ of the circle.  Where
$C_m(p)$ is the cut locus of $p\in\mathcal{M}$. 

It can be shown that if $q\in\mathcal{M}-C_m(p)$ there exists
a unique minimizing geodesic joining $p$ and $q$.  In 
\cite{carmo92} 
\begin{equation}
i(\mathcal{M}) = \inf_{p\in\mathcal{M}} d(p,C_m(p))
\end{equation}
is called the injectivity radius of $\mathcal{M}$.
So if $\varepsilon<i(\mathcal{M})$ then $\exp_p$ is
injective on the open ball $S_\varepsilon(p)$.

Tubular neighbourhood for a curve is constructed by
taking only the normal geodesics to the curve at a point
and assuring the injectivity of the $exp_p$ map along
these normal directions.  We now propose to
work inside the injectivity radius to straighten
out the problem with sampling.

\begin{prp}
Let $\mathcal{C}\in\mathcal{M}$ be a smooth, simple
and closed curve.  If $\mathcal{S}$ is a uniform
$\varepsilon$- sample of $\mathcal{C}$ then
$\mathcal{S}$ is dense for  
$\varepsilon < \min \{\inf_{p\in\mathcal{C}}f(p),i(\mathcal{M})\}$.
\label{epsilon}
\end{prp}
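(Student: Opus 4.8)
The plan is to show that $U:=\bigcup_{s\in\mathcal{S}}B_{\varepsilon}(s)$ is a tubular neighbourhood of $\mathcal{C}$ in the sense already fixed, i.e. that $U$ is an open neighbourhood of $\mathcal{C}$ on which the normal exponential map $F\colon\mathcal{C}\times(-\varepsilon,\varepsilon)\to\mathcal{M}$, $F(p,t)=\exp_{p}\bigl(t\,N(p)\bigr)$, is a diffeomorphism. I would first split the statement into two claims, one for each term of $\min\{\inf_{p}f(p),\,i(\mathcal{M})\}$: (i) the bound $\varepsilon<i(\mathcal{M})$ guarantees, by the fact recorded just after the definition of $i(\mathcal{M})$, that every geodesic of length $<\varepsilon$ is the \emph{unique} minimizing geodesic between its endpoints and that $\exp_{p}$ is a diffeomorphism on $S_{\varepsilon}(p)$, so each normal segment $N_{\varepsilon}(p)$ is embedded and $F$ is well defined; (ii) the bound $\varepsilon<\inf_{p}f(p)$ plays exactly the role that $\varepsilon<\min f(p)$ plays for plane curves in the preceding proposition: it keeps the tube radius small enough that distinct length-$\varepsilon$ normal segments do not meet.

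For the local picture I would repeat the computation of the first proposition of this section, with Jacobi fields replacing the elementary identity $\alpha'(0)=(1-tk(p))\,\mathcal{C}'(0)$: along the unit-speed normal geodesic $\gamma_{p}(t)=\exp_{p}(tN(p))$ the differential $dF_{(p,t)}$ degenerates precisely at a focal point of $\mathcal{C}$ along $\gamma_{p}$. A focal point is a critical value of the foot-point correspondence and hence lies on (or beyond) the medial axis $M$, so the first one occurs at parameter $\ge f(p)\ge\inf_{p'}f(p')>\varepsilon$; together with the absence of conjugate points of $p$ before distance $i(\mathcal{M})>\varepsilon$, this makes $dF_{(p,t)}$ an isomorphism for all $|t|<\varepsilon$, so $F$ is a local diffeomorphism on all of $\mathcal{C}\times(-\varepsilon,\varepsilon)$.

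The main obstacle is the \emph{global} injectivity of $F$. Suppose $F(p_{1},t_{1})=F(p_{2},t_{2})=x$. Since $|t_{i}|<\varepsilon<i(\mathcal{M})$, each normal geodesic from $p_{i}$ to $x$ is minimizing, so $d(x,p_{i})=|t_{i}|$ and hence $d(x,\mathcal{C})<\varepsilon$. Pick $q\in\mathcal{C}$ realizing $d(x,\mathcal{C})$; by the first variation formula the minimizing geodesic from $x$ to $q$ meets $\mathcal{C}$ orthogonally. If $x$ had two distinct nearest points it would lie in $M$, and then for a nearest point $q'$ we would get $f(q')=d(q',M)\le d(q',x)=d(x,\mathcal{C})<\varepsilon$, contradicting $\varepsilon<\inf_{p}f(p)$; so the nearest point is unique, call it $\pi(x)$, and $x=F\bigl(\pi(x),\pm d(x,\mathcal{C})\bigr)$. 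It then remains to exclude $x$ lying on a second, possibly non-minimizing, normal segment of length $<\varepsilon$ from a point $p_{i}\ne\pi(x)$: this is the classical tubular-neighbourhood (reach) theorem for the compact submanifold $\mathcal{C}$, since $\inf_{p}f(p)$ equals the reach of $\mathcal{C}$, and it is here that the two hypotheses are used jointly — $i(\mathcal{M})$ to make every short geodesic minimizing, $\inf f$ to bound the reach. Granting this, $(p_{1},t_{1})=(p_{2},t_{2})$, so $F$ is an injective local diffeomorphism, hence a diffeomorphism onto the open set $N_{\varepsilon}(\mathcal{C})$. Finally I would reconcile $N_{\varepsilon}(\mathcal{C})$ with $U$: $\mathcal{C}\subset U$ because $\mathcal{S}$ is a uniform $\varepsilon$-sample (consecutive samples are within $\varepsilon$, and since $\varepsilon<\inf f$ the short sub-arc between them stays within $\varepsilon$ of an endpoint), while $U\subset N_{\varepsilon}(\mathcal{C})$ because every $x$ with $d(x,\mathcal{C})\le\varepsilon$ sits on the normal segment through its unique foot point; thus $U$ is sandwiched between $\mathcal{C}$ and the tube and carries the foot-point projection, i.e. $U$ is itself a tubular neighbourhood of $\mathcal{C}$ — which is exactly the assertion that $\mathcal{S}$ is dense. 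A little care with open versus closed balls at the boundary radius $\varepsilon$ is needed but is routine.
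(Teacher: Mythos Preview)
The paper does not actually supply a proof of this proposition. It is stated immediately after the heuristic discussion of the cut locus and injectivity radius, and the authors evidently regard it as the Riemannian analogue of the preceding planar proposition (whose entire proof is ``By the definition of $f(p)$''), with the extra constraint $\varepsilon<i(\mathcal{M})$ inserted to repair the failure illustrated by the counterexample in Fig.~\ref{medial_violate}. In effect the paper's argument is: the medial-axis bound handles the curve's own geometry exactly as in the plane, and the injectivity-radius bound ensures $\exp_p$ is injective on $S_\varepsilon(p)$ so that the ambient topology does not interfere; hence the union of $\varepsilon$-balls is a tubular neighbourhood.

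Your write-up is therefore considerably more detailed than anything in the paper. The decomposition into (i) local immersion via absence of focal/conjugate points and (ii) global injectivity via the nearest-point projection is the standard route and is sound. One point worth tightening: the paper's definition of ``tubular neighbourhood'' is literally the image $N_\varepsilon(\mathcal{C})$ of the normal exponential map, not an arbitrary open set on which the foot-point projection exists, so strictly speaking $U=\bigcup_{s}B_\varepsilon(s)$ cannot \emph{equal} such a set (balls are not unions of normal segments of a fixed length). What you actually need, and what suffices for the downstream theorems, is that $U$ is an open neighbourhood of $\mathcal{C}$ contained in some $N_{\varepsilon'}(\mathcal{C})$ on which $F$ is a diffeomorphism; your sandwich $\mathcal{C}\subset U\subset N_\varepsilon(\mathcal{C})$ delivers exactly this. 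The paper's phrasing ``forms a tubular neighbourhood'' should be read in that looser sense, and you might say so explicitly rather than asserting that $U$ is itself a tubular neighbourhood in the strict sense.
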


\subsection{Flatness of the curve inside a tubular neighbourhood}
If the underlying manifold is a plane and a curve is sampled densely then based
on the tubular neighbourhood it is proven, in \cite{gomes94}, that
euclidean minimal spanning tree reconstructs the sampled arc.
The crucial argument for the correctness of above is the denseness of the sample.
It comes from the observation that an arc does not wander too much inside a tubular disk.
Which avoids the connections between the non-consecutive sample points in $\mathcal{S}$,
defined as short chords. 

Now we give an alternate proof of flatness of the curve segment inside a tubular neighbourhood
in plane.  And after that we extend the proof to curves in the riemannian manifold.

\begin{thm}
Let $p$ and $q$ be two points on an arc $\mathcal{C}\subset\mathbb{R}^2$ such that $q$ is inside the 
tubular disk $B_{\varepsilon}(p)$ centred at $p$. Then the sub arc $pq$ of $\mathcal{C}$ 
is completely inside $B_{pq/2}(c)$, where $c$ is the center of
diameter $pq$.
\label{alt_plane}
\end{thm}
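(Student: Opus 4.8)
The plan is to argue by contradiction, exploiting the diffeomorphism property of the tubular map $F:\mathcal{C}\times(-\varepsilon,\varepsilon)\to N_\varepsilon(\mathcal{C})$ together with the elementary geometry of the disk $B_{pq/2}(c)$. Suppose some point $x$ of the subarc $pq$ lies outside $B_{pq/2}(c)$. Since the subarc is connected, starts at $p$ and ends at $q$, both of which lie on $\partial B_{pq/2}(c)$, and it also stays inside $B_\varepsilon(p)$ by hypothesis (here I would first note $B_\varepsilon(p)$ is a tubular disk, so the whole subarc $pq$ is contained in the tubular neighbourhood $N_\varepsilon(\mathcal{C})$, because a sub-arc of $\mathcal{C}$ between two nearby points cannot leave the tube — this is exactly the ``does not wander'' phenomenon I must make precise). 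The idea is that a subarc leaving $B_{pq/2}(c)$ would have to bend back toward the chord $pq$, and this bending forces a normal geodesic (here a normal line in $\mathbb{R}^2$) to meet $\mathcal{C}$ twice inside the tube, contradicting injectivity of $F$.

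More concretely, here is the route I would take. Set up coordinates so that the chord $pq$ lies on the $x$-axis with midpoint $c$ at the origin, $p=(-pq/2,0)$, $q=(pq/2,0)$. The circle $\partial B_{pq/2}(c)$ is the Thales circle on diameter $pq$: a point $y$ lies strictly outside it iff the angle $\angle\, p\,y\,q$ is strictly acute. Now consider the function measuring signed distance from the subarc to the line $pq$; at the endpoints it is zero. If the subarc leaves the disk, consider the point $x^\ast$ on the subarc farthest from the line $pq$ (extremal in the direction normal to the chord). At $x^\ast$ the tangent to $\mathcal{C}$ is parallel to the chord $pq$, hence the normal line to $\mathcal{C}$ at $x^\ast$ is perpendicular to $pq$. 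The key step: I would show this normal line, extended a distance less than $\varepsilon$, must re-enter the region between the subarc and the chord and pierce $\mathcal{C}$ again — because $x^\ast$ being outside $B_{pq/2}(c)$ forces $\operatorname{dist}(x^\ast,\text{line }pq)$ to exceed the ``half-width'' that the tube can accommodate near $p$ and $q$, so that the normal segment at $x^\ast$ of length $\varepsilon$ overlaps a normal segment based at a point of the subarc near an endpoint. Two distinct normal segments of length $<\varepsilon$ meeting contradicts the defining property of the tubular neighbourhood of radius $\varepsilon$.

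The main obstacle I anticipate is making the geometric claim ``the normal at the extremal point $x^\ast$ must hit $\mathcal{C}$ again'' fully rigorous rather than merely plausible from a picture; in particular one must rule out the subarc being short and yet bulging out, and one must carefully use that $q\in B_\varepsilon(p)$ (so $pq<\varepsilon$, in fact $pq<2\varepsilon$) to control lengths. A clean way to handle this is: parametrise the subarc, let $F(\mathcal{C}(s),t)=\mathcal{C}(s)+tN(\mathcal{C}(s))$, and observe that any point $y$ in the closed region bounded by the subarc and the chord $pq$ is within distance $pq/2 < \varepsilon$ of $\mathcal{C}$, hence has a unique nearest point on $\mathcal{C}$ and lies on exactly one normal segment; then a degree/continuity argument on the map $s\mapsto$ (foot of normal through $y$) shows the subarc cannot enclose any point not already swept by its own normals, which pins the subarc inside $B_{pq/2}(c)$. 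Once this is established in $\mathbb{R}^2$, the same skeleton — replace ``normal line'' by ``normal geodesic'', ``Thales circle'' by the metric ball $B_{pq/2}(c)$, and invoke Proposition \ref{epsilon} so that everything happens inside the injectivity radius where $\exp_p$ is injective and nearest-point projection onto $\mathcal{C}$ is single-valued — carries over to the Riemannian setting, which is presumably the subsequent theorem.
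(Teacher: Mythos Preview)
Your contradiction strategy---force two normal segments of the curve to meet inside the tube and thereby violate the diffeomorphism $F:\mathcal{C}\times(-\varepsilon,\varepsilon)\to N_\varepsilon(\mathcal{C})$---is the paper's strategy as well. The gap you yourself flag, however, is real, and it stems from a suboptimal choice of extremal function. You maximise the distance from the \emph{chord line} $pq$; at the maximiser $x^\ast$ this only tells you the normal is perpendicular to $pq$, and you are then left with the hard, picture-dependent task of arguing that this normal hits $\mathcal{C}$ a second time within $\varepsilon$.

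The paper avoids this entirely by maximising the distance from the \emph{centre} $c$ instead. If the subarc leaves $B_{pq/2}(c)$, then on any excursion outside the disk (say between a boundary crossing $q'$ and the endpoint $q$) take the point $q''$ with maximal $d(c,\cdot)$. At $q''$ the arc is tangent to the circle of radius $d(c,q'')$ about $c$, so the normal to $\mathcal{C}$ at $q''$ passes \emph{through} $c$. A symmetric argument on the $p$-side (or, in the degenerate case where the arc is tangent to $\partial B_{pq/2}(c)$ at an endpoint, that endpoint itself) yields a second curve-normal through $c$. Since $pq\le\varepsilon$ gives $c\in B_{pq/2}(c)\subset B_\varepsilon(p)\subset N_\varepsilon(\mathcal{C})$, the point $c$ lies in the tube and sits on two distinct normals: contradiction. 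The paper organises this into three cases according to whether the arc is tangent to or transversally crosses $\partial B_{pq/2}(c)$ at each endpoint, but the mechanism is the same in all three.

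So the missing idea is simply: replace ``farthest from the line $pq$'' by ``farthest from the point $c$''. With that change the normals are pinned at $c$ automatically, and your degree/continuity backup plan becomes unnecessary. Your remark about the Riemannian extension is correct and matches the paper's subsequent theorem: Gauss's lemma supplies the orthogonality of geodesic spheres to radial geodesics, so ``tangent to a circle about $c$ $\Rightarrow$ normal through $c$'' survives verbatim.
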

\begin{proof}
Since $q\in B_{\varepsilon}(p)$, $pq=d(p,q)\leq \varepsilon$.
Now pq being a segment of an arc $\mathcal{C}$
there are three possible ways, as shown in Figure.\ref{plane_alternate_proof}, in which 
it intersects with $B_{pq/2}(c)$. 
\begin{figure}[ht]
\centering
\subfigure[]{
\includegraphics[scale=.25]{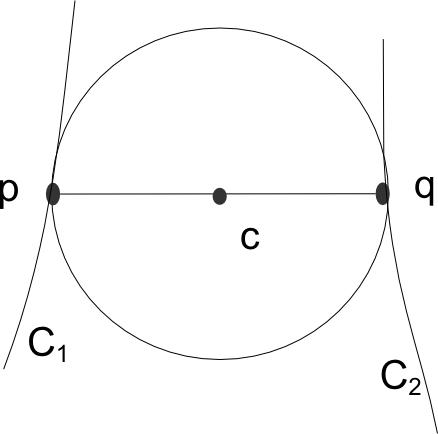}
\label{case1}
}
\subfigure[]{
\includegraphics[scale=.25]{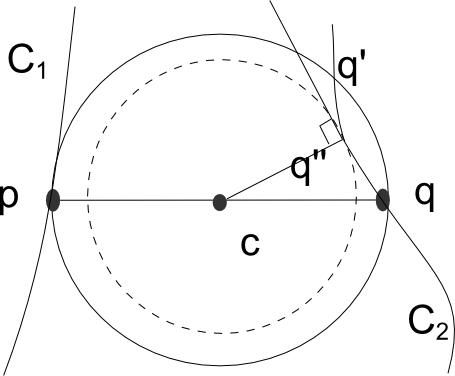}
\label{case2}
}
\subfigure[]{
\includegraphics[scale=.25]{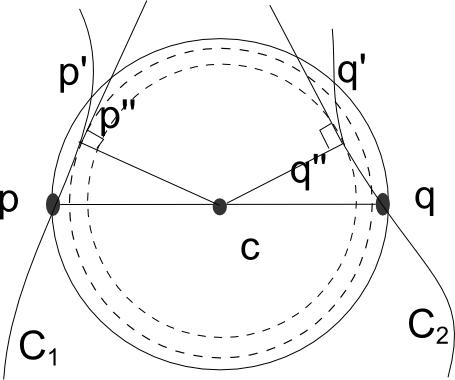}
\label{case3}
}
\caption[Alternate proof]{
	\subref{case1} 
	The arc touches $B_{pq/2}(c)$
	\subref{case2} 		
	The arc touches $B_{pq/2}(c)$ at $p$ and 
	intersects its boundary at $q'$ while passing through $q$ 
	\subref{case3}
	The arc intersects boundary of $B_{pq/2}(c)$ at $p'$ 
	and $q'$ while passing through $p$ and $q$ respectively}
\label{plane_alternate_proof}
\end{figure}

For the possibility shown in 
Figure.\ref{case1}, it is evident that
center $c$ lies on two normals passing through $p$ and $q$, i.e.
$c\in\overline{pq}$,
since $B_{pq/2}(c)$ and $\mathcal{C}$ share common tangents
at $p$ and $q$.  This can not happen
since $B_{pq/2}(c)\subset B_{\varepsilon}(p)$, a subset of a tubular
neighbourhood.  

Let us consider the case in Figure.\ref{case2},
arc $\mathcal{C}$ touches $B_{pq/2}(c)$ at
$p$ and intersects the boundary of $B_{pq/2}(c)$ at $q$ and $q'$.
We can find out a point $q''$ on the segment $qq'$ which is nearest to $c$.
At $q''$ the circle with center $c$ and radius $d(c,q'')$ shares a common
tangent with $\mathcal{C}$.  Hence $c$ lies on the two normals 
$\overline{pc}$ and $\overline{q''c}$.  This can not happen inside
a tubular neighbourhood.

Finally we consider the Figure.\ref{case3}.  On segments
$pp'$ and $qq'$ we find $p''$ and $q''$ nearest to $c$.
In this case $c$ lies on $\overline{p''c}$  and 
$\overline{q''c}$.  Since $c$ is inside tubular neighbourhood
this can not happen.

So the only possibility we are left with is that the segment
$pq$ of curve $\mathcal{C}$ lies entirely inside $B_{pq/2}(c)$.
\end{proof}
\begin{thm}
Let $p$ and $q$ be two points on an arc $\mathcal{C}\subset\mathcal{M}$, where 
$\mathcal{M}$ is any Riemannian manifold, such that $q$ is inside the 
tubular disk $B_{\varepsilon}(p)$ centred at $p$. Then the sub arc $pq$ of $\mathcal{C}$ 
is completely inside $B_{pq/2}(c)$, where $c$ is the center of
diameter $pq$.
\label{alt_surf}
\end{thm}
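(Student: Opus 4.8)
The plan is to mirror the proof of Theorem~\ref{alt_plane}, translating each Euclidean ingredient into its Riemannian counterpart and keeping everything inside the injectivity radius so that those counterparts are well defined. Throughout I take $\varepsilon$ as in Proposition~\ref{epsilon}, so that $B_{\varepsilon}(p)$ is a genuine tubular disk, every point of it lies on exactly one normal geodesic segment of $\mathcal{C}$, and the foot-point (nearest-point) projection $\pi$ onto $\mathcal{C}$ is defined on it with $d(x,\pi(x)) = d(x,\mathcal{C})$ (the last because the nearest point of a submanifold is joined to $x$ by a perpendicular geodesic, which inside the tube is the unique normal segment through $x$). Since $q \in B_{\varepsilon}(p)$ we have $d(p,q) \le \varepsilon < i(\mathcal{M})$, so the ``diameter'' $pq$ is the unique minimizing geodesic $\overline{pq}$; write $c$ for its midpoint, $\rho = d(p,q)/2 = d(c,p) = d(c,q)$, and note $B_{\rho}(c) \subseteq B_{\varepsilon}(p)$. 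Because $\rho < i(\mathcal{M})$, $\exp_c$ is a diffeomorphism onto $B_{\rho}(c)$, the geodesic spheres $\partial B_{r}(c)$ with $r \le \rho$ are smooth hypersurfaces, and by Gauss's lemma every radial geodesic issuing from $c$ meets each such sphere orthogonally; in particular $\nabla\, d(c,\cdot)$ is the outward unit radial field there.

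Now suppose, for contradiction, that the sub arc $\gamma$ of $\mathcal{C}$ from $p$ to $q$ is not contained in $B_{\rho}(c)$. The function $r(t) = d(c,\gamma(t))$ is continuous, takes the value $\rho$ at the two endpoints, and exceeds $\rho$ somewhere; by compactness let $x_{0} = \gamma(t_{0})$ be a point where it attains its maximum, so $r(t_{0}) > \rho$ and $t_{0}$ is an interior parameter (the endpoints give the smaller value $\rho$). Since $x_0 \ne c$, $r$ is differentiable near $t_0$, and at an interior maximum $r'(t_{0}) = \langle \nabla\, d(c,\cdot)|_{x_0},\gamma'(t_0)\rangle = 0$; by Gauss's lemma this forces $\gamma'(t_{0})$ to be tangent to $\partial B_{r(t_{0})}(c)$, i.e.\ orthogonal to the radial geodesic from $c$ to $x_{0}$. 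Equivalently, that radial geodesic is normal to $\mathcal{C}$ at $x_{0}$, so $c$ lies on the normal geodesic of $\mathcal{C}$ at $x_{0}$. This is the Riemannian substitute for the three planar sub-cases of Theorem~\ref{alt_plane} (arc tangent to, singly crossing, or doubly crossing $\partial B_{pq/2}(c)$): in each of those one likewise locates a point of the arc at which a concentric geodesic sphere is tangent to $\mathcal{C}$, hence a normal of $\mathcal{C}$ runs through $c$.

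To close the argument I exhibit a second normal of $\mathcal{C}$ through $c$. Let $a = \pi(c)$ be the foot point of $c$; then $c$ lies on the normal geodesic segment at $a$ and $d(c,a) = d(c,\mathcal{C}) \le d(c,p) = \rho < r(t_{0}) = d(c,x_{0})$, so $a \ne x_{0}$. Thus $c$ lies on two distinct normal geodesic segments of $\mathcal{C}$, contradicting the defining property of the tubular neighbourhood containing $B_{\varepsilon}(p)$. Hence no such $x_{0}$ exists and $\gamma \subset B_{\rho}(c) = B_{pq/2}(c)$, as claimed. I expect the main obstacle to be precisely this last piece of bookkeeping: one must ensure that $x_{0}$ and the radial geodesic $\overline{c x_{0}}$ remain inside the region on which the normal-segment structure is unique (so that ``$c$ on two normals'' is genuinely contradictory), and that $\overline{cx_0}$ is the normal \emph{segment} at $x_0$ rather than a longer geodesic — which is exactly what the hypotheses $\varepsilon < i(\mathcal{M})$ and $\varepsilon < \inf_{p\in\mathcal{C}} f(p)$ of Proposition~\ref{epsilon} are there to guarantee, and why one restricts to the sub arc lying in the tubular disk. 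In the plane these subtleties are invisible because geodesics are straight lines with no conjugate points; the one genuinely new tool over the Euclidean argument is Gauss's lemma, which converts ``critical point of the distance-to-$c$ function along $\mathcal{C}$'' into ``normal of $\mathcal{C}$ through $c$''.
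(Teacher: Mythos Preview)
Your argument is correct and follows the same overall strategy as the paper: invoke Gauss's lemma to convert ``critical point of $d(c,\cdot)$ along $\mathcal C$'' into ``the radial geodesic from $c$ is normal to $\mathcal C$'', and then derive a contradiction with the defining property of the tubular neighbourhood (no two normal segments through the same point). The paper's own proof simply states Gauss's lemma and then refers back verbatim to the three-case analysis of Theorem~\ref{alt_plane}; you replace that case split by a single global-maximum argument for $r(t)=d(c,\gamma(t))$ and obtain the \emph{second} normal through $c$ via the foot-point projection $\pi(c)$, whereas the paper locates two tangency points on the arc itself. This is a mild streamlining rather than a genuinely different route, and your explicit flagging of the book-keeping issue (that $x_0$ and the segment $\overline{cx_0}$ must remain in the region where $\exp_c$ is a diffeomorphism and the normal-segment decomposition is unique) is more careful than the paper's own one-line reduction.
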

\begin{proof}
For $\mathcal{M}:=\mathbb{R}^n$ we know that $\overline{cp}$, $p\in S^{n-1}$,
is orthogonal to $T_pS^{n-1}$. 

Since we are working inside a tubular neighbourhood of the curve $\mathcal{C}$,
with $\varepsilon$ as prescribed in Proposition.\ref{epsilon}, $\exp : T_p\mathcal{M}\to\mathcal{M}$
is a diffeomorphism.
\begin{figure}
\begin{center}
	\includegraphics[scale=.25]{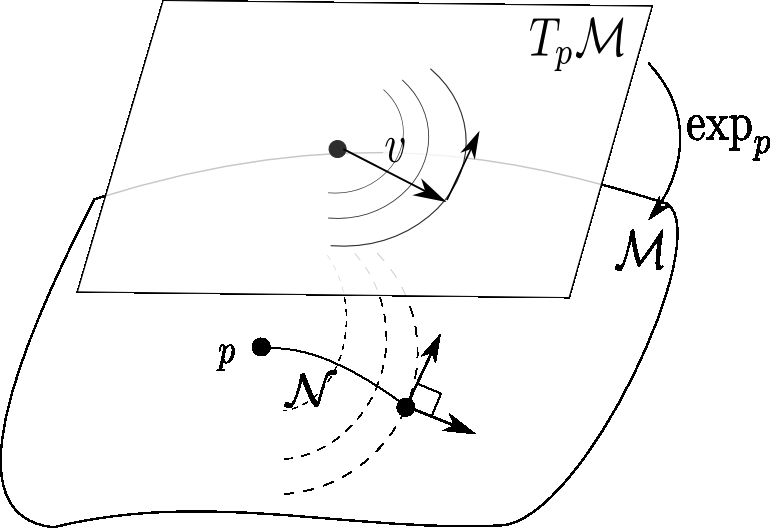}
	\caption{Tangent space of a point $p\in\mathcal{M}$ where
	$||v||<\varepsilon$ and the corresponding geodesic
	$\mathcal{N}$.}
\label{gausslm}
\end{center}
\end{figure}
Gauss's lemma \cite{carmo92}
asserts that the image of a sphere of sufficiently small radius($<\varepsilon$) 
$T_p\mathcal{M}$ under the exponential map is perpendicular to all
geodesics originating at $p$ Fig.\ref{gausslm}.

And the rest of the proof follows from arguments of 
the Theorem.\ref{alt_plane}.
\end{proof}

\section{Curve Reconstruction on a Riemannian manifold}
\label{order}
\subsection{Ordering}
We model a curve with a graph where the 
vertices of the graph are the sample points and the edges
indicate the order in which the vertices are connected.
This also implies a geometric realization of the graph.
If further we put the distance between two sample points
as the edge cost, it becomes a weighted graph.
A minimal spanning tree for a weighted graph is a spanning tree for which
the sum of edge weights is minimal.  
To keep the notations consistent we define the geodesic polygonal path
on riemannian manifold as the path along which every 
vertex(sample point) pair is connected by a geodesic segment.

Computing the minimal spanning tree use the following fundamental property,
let $X\cup Y$ be a partition of the set of vertices of a connected weighted
graph $G$.  Then any shortest edge in $G$ connecting a vertex of $X$ and a
vertex of $Y$ is an edge of a minimal spanning tree.
If we use MST to model an arc, we must ensure that there are no short
chords in the graph, proved in \cite{gomes94}.  

Our work focuses on closed, simple, smooth curves. We expect
the MST for which every vertex has degree two.
In other words the sample point has exactly two neighbours 
(samples) on the curve.

\begin{thm}
If $\mathcal{S}$ is
a dense sample of $\mathcal{C}\subset\mathcal{M}$ then MST gives a correct
geodesic polygonal reconstruction of $\mathcal{S}$.  Where $\mathcal{C}$
is a smooth, closed and simple curve.
\end{thm}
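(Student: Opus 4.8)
The plan is to adapt the argument of Gomes and Mota for plane curves \cite{gomes94}, with Theorem \ref{alt_surf} supplying the geometric input that in the Euclidean case is supplied by flatness. List the sample points in their cyclic order along $\mathcal{C}$ as $s_0,s_1,\dots,s_{n-1}$ (indices mod $n$), and in the complete graph on $\mathcal{S}$ with geodesic lengths as weights call an edge $\{s_i,s_{i+1}\}$ a \emph{curve edge} and every other edge a \emph{chord}. The correct geodesic polygonal reconstruction is the closed geodesic polygon visiting $s_0,\dots,s_{n-1}$ in order; since an MST has only $n-1$ edges it can be at most this polygon with one edge deleted, and it is correct precisely when it is built entirely from curve edges — in that case it must be the curve-cycle with its longest curve edge removed, and reinstating that edge recovers the polygon, in which every sample has degree two (for an open arc $\mathcal{C}$ the MST is already this path and no reinstatement is needed). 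Thus everything reduces to one statement: \emph{no chord belongs to any MST}.

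To prove that I would use the cycle property of minimum spanning trees: an edge that is the unique heaviest edge of some cycle lies in no MST. Fix a chord $e=\{s_a,s_b\}$. If $d(s_a,s_b)>\varepsilon$, close $e$ up with either of the two curve-arcs between $s_a$ and $s_b$ (each has at least two curve edges, since $e$ is a chord); every curve edge has weight at most $\varepsilon$ by the definition of a uniform $\varepsilon$-sample, so $e$ is strictly heaviest in that cycle. If instead $d(s_a,s_b)\le\varepsilon$, then $s_b$ lies in the tubular disk $B_{\varepsilon}(s_a)$, so Theorem \ref{alt_surf} applies and the sub-arc of $\mathcal{C}$ from $s_a$ to $s_b$ that it controls — necessarily the short one, since the complementary sub-arc could not fit inside a ball of radius $d(s_a,s_b)/2\le\varepsilon/2$ — lies inside $B_{d/2}(c)$, where $d:=d(s_a,s_b)$ and $c$ is the geodesic midpoint of $s_a$ and $s_b$. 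Because $\mathcal{C}$ does not leave and re-enter $B_{\varepsilon}(s_a)$, that short sub-arc is exactly $s_a s_{a+1}\cdots s_b$ with consecutive indices and at least one intermediate sample, and for any curve edge $\{s_k,s_{k+1}\}$ on it the triangle inequality gives $d(s_k,s_{k+1})\le d(s_k,c)+d(c,s_{k+1})<d/2+d/2=d(s_a,s_b)$; hence $e$ is again the strict maximum of the cycle it forms with $s_a s_{a+1}\cdots s_b$. Either way $e$ is in no MST.

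With all chords excluded, every MST is a spanning tree made only of curve edges, i.e. the cycle $s_0 s_1\cdots s_{n-1}s_0$ with a single edge removed; automatically each sample has degree at most two, so the tree is a path, and minimality forces the removed edge to be a longest curve edge. Closing the path up along that missing edge recovers the closed geodesic polygon through $\mathcal{S}$ in cyclic order, which is the asserted correct reconstruction; this proves the theorem.

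The step I expect to be the real obstacle is the one quietly invoked above as "$\mathcal{C}$ does not leave and re-enter $B_{\varepsilon}(s_a)$", equivalently that the component of $\mathcal{C}\cap B_{\varepsilon}(s_a)$ through $s_a$ is a single sub-arc carrying precisely the index-consecutive samples around $s_a$ — this is both what makes Theorem \ref{alt_surf} applicable and what identifies the correct sub-arc. In the plane it is immediate from the local feature size, but on a general Riemannian manifold it is exactly the reason the sampling radius in Proposition \ref{epsilon} was also capped by the injectivity radius $i(\mathcal{M})$: only for $\varepsilon<i(\mathcal{M})$ is $\exp_{s_a}$ a diffeomorphism on $B_{\varepsilon}(s_a)$ and the normal-geodesic foliation of the tubular neighbourhood honest there, so that the tubular-neighbourhood diffeomorphism cannot fold a distant piece of $\mathcal{C}$ back into the disk. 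A minor residual technicality is the strictness of $d(s_k,s_{k+1})<d(s_a,s_b)$, which needs only the observation that an intermediate sample cannot coincide with the geodesic midpoint $c$; this is handled exactly as in \cite{gomes94}.
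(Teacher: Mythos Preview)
Your argument is correct and follows essentially the same route as the paper: both rest on the cycle/cut property of minimum spanning trees together with Theorem~\ref{alt_surf} to show that any chord is dominated by the curve edges along the sub-arc it spans, and the paper's proof is in fact just the terse contrapositive of yours (assume a chord $pq$ lies in the MST, deduce from the MST property that some curve edge on the sub-arc must exceed $d(p,q)$, then contradict flatness). Your version is more careful --- you separate the case $d(s_a,s_b)>\varepsilon$, identify which sub-arc Theorem~\ref{alt_surf} controls, and flag the re-entry and strictness issues the paper glosses over --- but the underlying idea is identical.
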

\begin{proof}
We show that the geodesic polygonal path has no short chords.  The argument
is similar to the proof provided for planar case in \cite{gomes94}.  For
the completeness of the article we restate the argument here.
Suppose that MST does not give a correct geodesic polygonal reconstruction
of $\mathcal{S}$.  It implies that there are two points in MST which
are not consecutive.  Let these points be $p,q \in \mathcal{S}$.  
Since $pq$ is a short chord there has to be at least one edge 
in the sub arc $pq$ which has length greater than that of $pq$.
But since the sample $\mathcal{S}$ is dense, the arc $pq$ must be
contained in the disc with diameter $pq$.  Inside the disc 
there is no arc with length greater than the length of the
diameter.  So we have a contradiction.
\end{proof}

\subsection{Interpolation}
Once we have ordered the given set of points of the curve on a
curved manifold the next step is to interpolate this point
set to the desirable granularity.  The easiest way to interpolate
the points is to connect the points via straight line segments,
a linear interpolation.  In general for a manifold like 
$SE(3)$, the geodesics are the $\exp$ segments.  But
this scheme will not produce a differentiable curve which
might be necessary for some applications.  Based on the need
and application one may chose the interpolation scheme. 
In \cite{sho85} and \cite{kim95} a quaternion based approach 
is suggested and is very useful in computer graphics and animation.
Since we have represented $SE(3)$ using matrices we would
rather stick to matrices.  Motivated by motion
planning purposes various interpolation schemes
based on variational minimization techniques have been 
proposed and some of them turn out to be quite simple
for implementations, for a broad overview one will find
\cite{parkravani97} and \cite{li2006}
useful.  For the completeness of the reconstruction
process we have used \textit{de Casteljau} construction
as prescribed in \cite{alt2000}, i.e.
generalizing the multilinear interpolation on 
$SE(3)$, a piecewise $C^2$ curve connecting
two frames with given velocities.  The advantage is
that the expression is in the closed form with exponential and
log maps. 
\begin{figure}[h]
	\begin{center}
		\includegraphics[scale=.25]{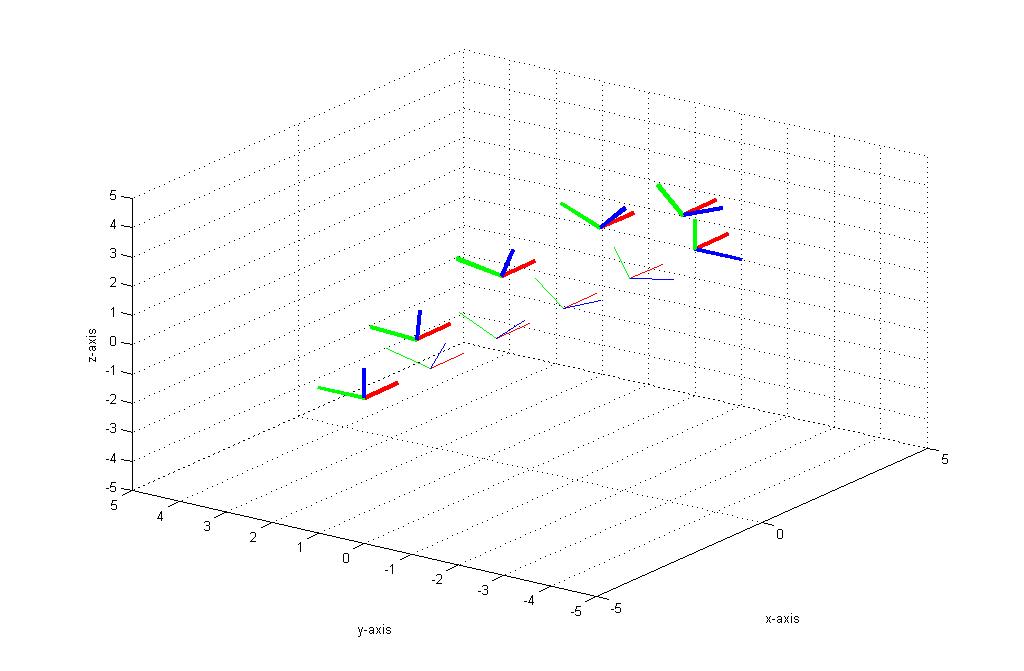}
	\end{center}
	\caption{Comparision of Exponential map and $C^2$ smooth interpolation
	in $SE(3)$ between $g_0=[0,0,0]\ltimes[-5,0,0]$ and $g_1=[\pi/2,0,0]\ltimes[5,0,0]$,
	with tangents $v_0^1=[0,0,0,3,1,1]$ and $v_2^1=[\pi/2,0,0,-1,-3,-1]$.}
	\label{comp_exp_decast}
\end{figure}
Suppose we do not know the velocities at the node points.
For such a case we have used a partial geodesic scheme to interpolate
between two elements of $SE(3)$.  Where, the rotational part is
interpolated by the $\exp$ map and the translational
component is a interpolated with spline segments.

\section{Simulations}
\label{sim}
\subsection{Curves on a Sphere}
We begin our simulations with examples of curves on a unit
sphere.  We show two curves with different densities required
by the MST for correct reordering of the samples.  
\begin{figure}[ht]
\centering
\subfigure[]{
\includegraphics[scale=.4]{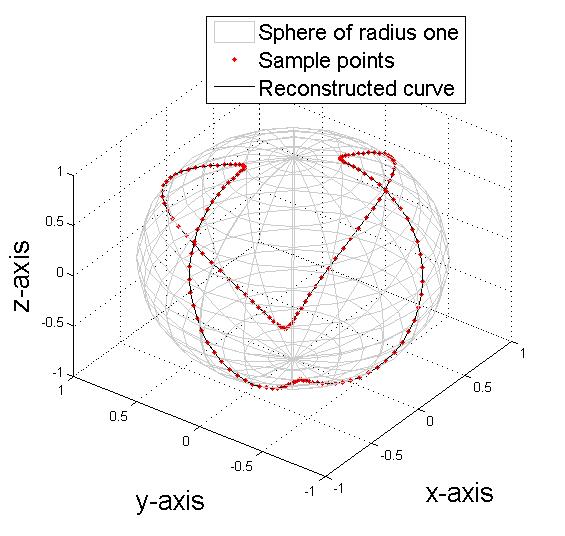}
\label{curve_sphere1}
}
\subfigure[]{
\includegraphics[scale=.4]{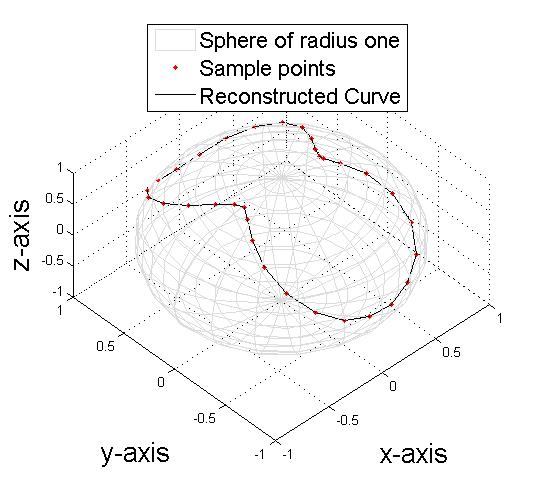}
\label{curve_sphere2}
}
\caption[curve_sphere]{
Example curves on a unit sphere.
}
\label{curve_sphere}
\end{figure}

The curves after reordering the sample points are 
shown in Fig.\ref{curve_sphere1} and Fig.\ref{curve_sphere2}.

\subsection{Curves in $SE(3)$}
In Fig.\ref{unordered_se3_1} an unordered set of frames in 
$SE(3)$ are shown.  We assume that the sample shown
is dense.

\begin{figure}[h]
	\begin{center}
		\includegraphics[scale=.2]{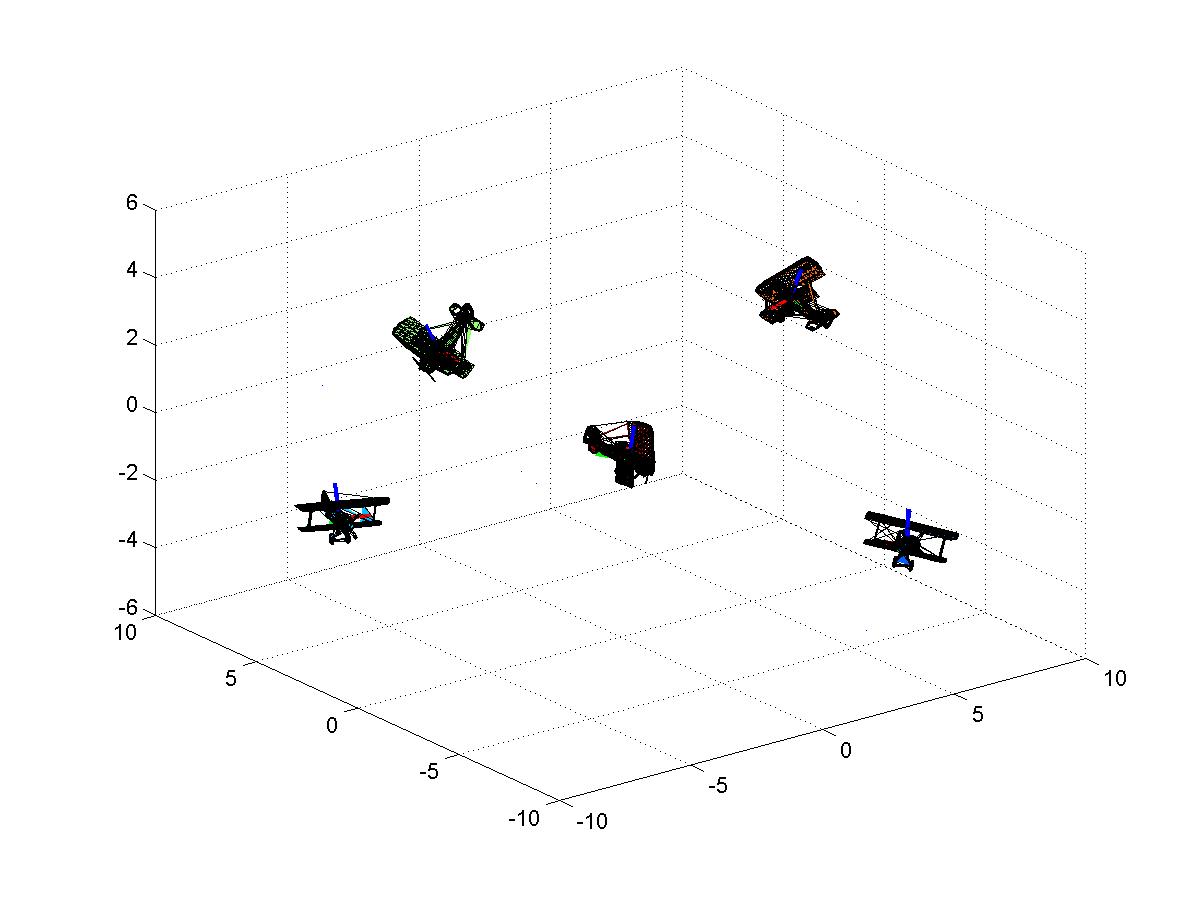}
	\end{center}
	\caption{A sample $\mathcal{S}$ of a curve $\mathcal{C}\subset SE(3)$}
	\label{unordered_se3_1}
\end{figure}

By the distance metric defined in Eqn.\ref{metric_se3} we compute
distances between all the frames.  Finally we compute the MST for the
complete weighted graph of frames with the computed distances as the
edge weights.  

\begin{figure}[h]
	\begin{center}
		\includegraphics[scale=.2]{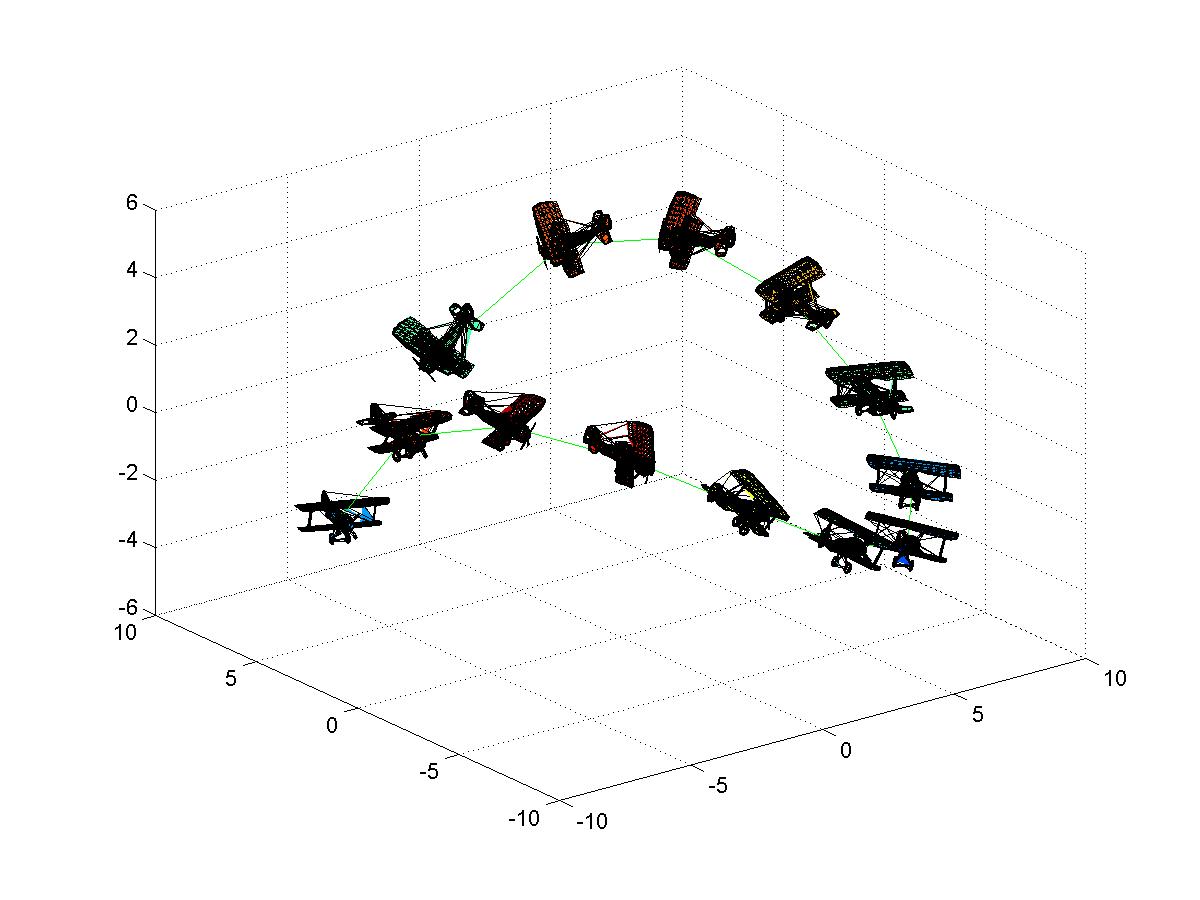}
	\end{center}
	\caption{Reconstructed curve in $SE(3)$}
	\label{se3_1}
\end{figure}

Once the ordering is done we interpolate the sample with partial geodesic
scheme.  Results of interpolation with two different granularities is
presented in Fig.\ref{se3_1} and Fig.\ref{se3_2}.

\begin{figure}[h]
	\begin{center}
		\includegraphics[scale=.2]{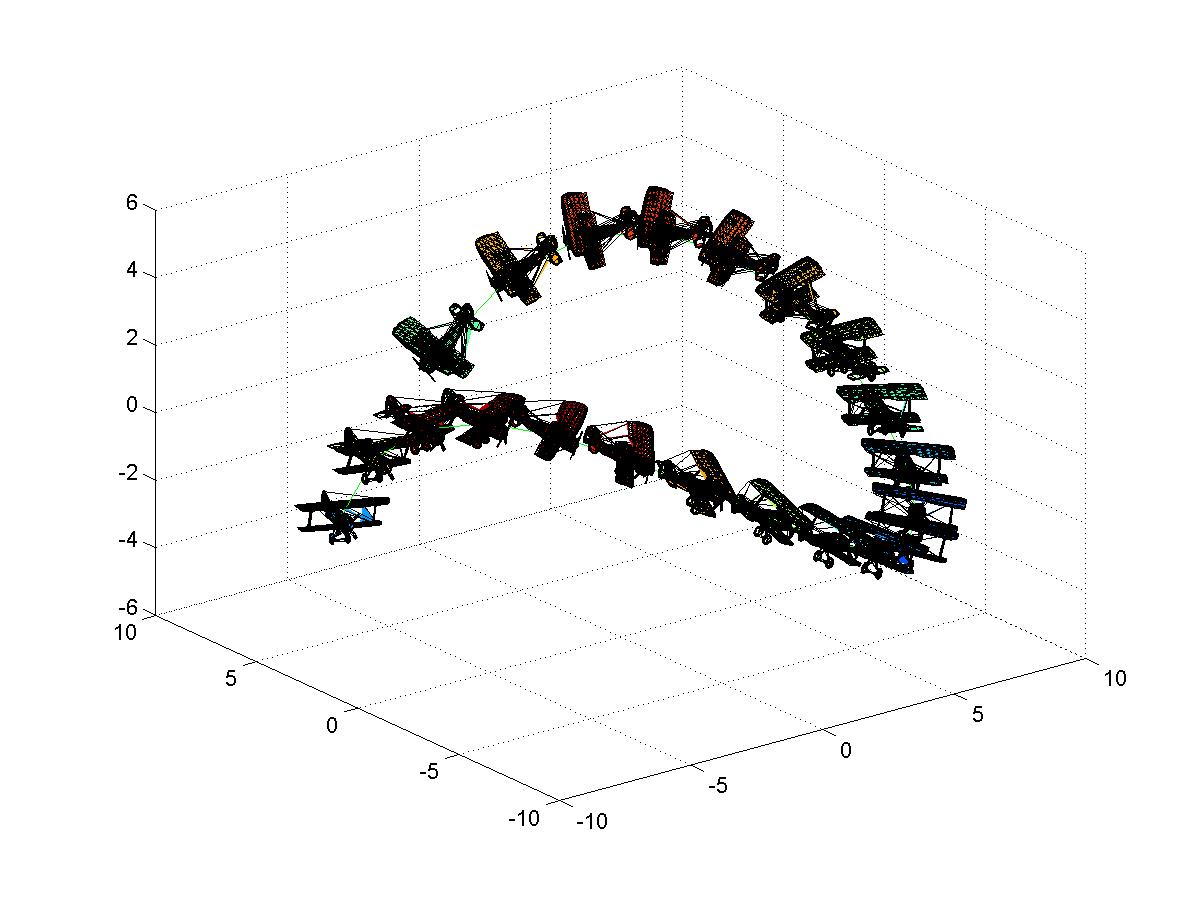}
	\end{center}
	\caption{Reconstructed curve in $SE(3)$ with finer interpolation}
	\label{se3_2}
\end{figure}

\subsection{Curve in $SE(3)$ with scaling parameter}
Suppose for a planar object in motion, we include 
scaling with respect to the center of 
mass along with the rotation and 
translation.  The resultant element will be of the
following form
\begin{equation}
A = \left[\begin{array}{cc}
e^{\lambda} R & d\\
0 & 1
\end{array}\right].
\label{ese2_form}
\end{equation}
This element operates on the point of the object in plane.
It scales($e^{\lambda}$) and rotates($R$) the object with respect to its center
of mass and then translates($d$) the center of mass.
With each such element we can associate a vector
$[\lambda, \theta, d_x, d_y]$.  The elements of the form given by
Eq.\ref{ese2_form}
with standard matrix multiplication forms a lie group.
We can extend the notions of tangent space and
exponential map to this lie group.  As discussed previously in
Sec.\ref{metric_emg} this group is a semi-direct product of 
elements of scaled rotations and translations.
The tangent space elements at identity, lie algebra elements, 
for scaled rotations are given by
\begin{equation}
[a] =
\lambda \left[\begin{array}{cc} 1 & 0 \\ 0 & 1\end{array}\right]
+
\theta \left[\begin{array}{cc} 0 & -1 \\ 1 & 0\end{array}\right]
\end{equation}
And the usual matrix exponentiation gives 
\begin{equation}
\exp{[a]} = e^{\lambda}
\left[\begin{array}{cc}
\cos{\theta} & -\sin{\theta}\\
\sin{\theta} & \cos{\theta}
\end{array}\right].
\end{equation}

\begin{figure}[ht]
\centering
\subfigure[]{
\includegraphics[scale=.2]{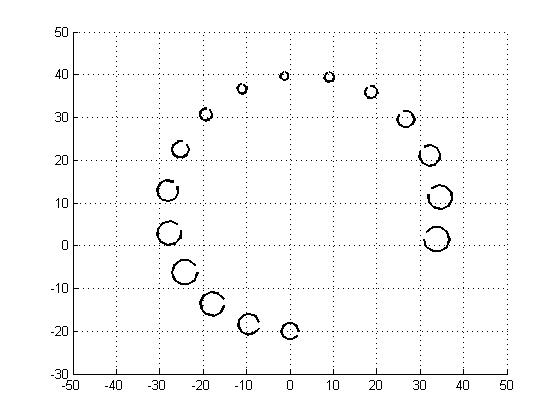}
\label{ese2_1}
}
\subfigure[]{
\includegraphics[scale=.2]{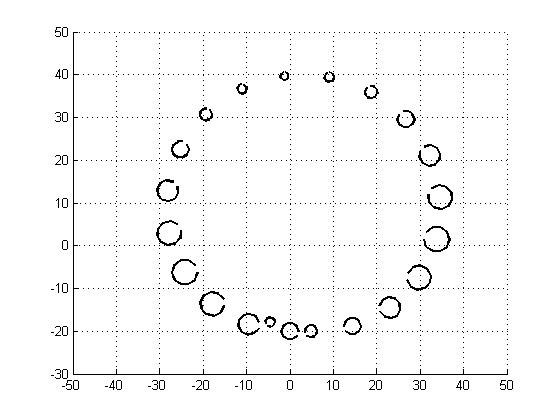}
\label{ese2_2}
}
\subfigure[]{
\includegraphics[scale=.2]{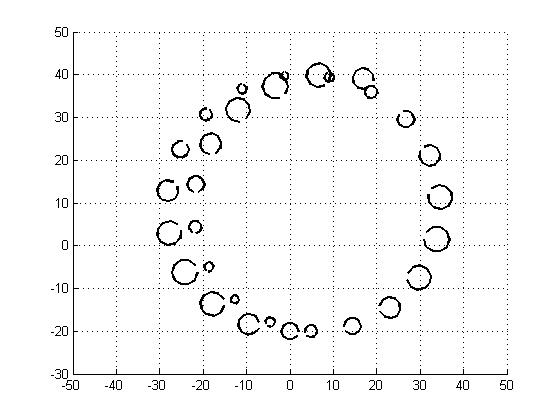}
\label{ese2_3}
}
\subfigure[]{
\includegraphics[scale=.2]{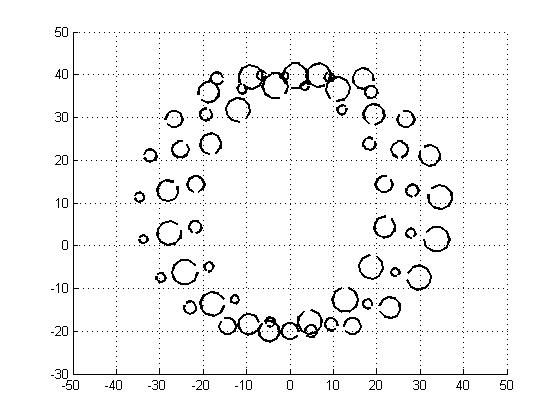}
\label{ese2_4}
}
\caption{
Various instances of a curve in $SE(2)$ with scaling.
}
\label{curve_ese2}
\end{figure}
We can construct a left-invariant riemannian metric
on this group.  It can be shown that for two elements 
$A_1, A_2$ in this group
\begin{equation}
d(A_1,A_2) = \sqrt{\alpha((\lambda_1-\lambda_2)^2+(\theta_1-\theta_2)^2)
+\beta \parallel d_1 - d_2 \parallel}
\label{ese2_metric}
\end{equation}
is a valid distance metric.
In Fig.\ref{curve_ese2}, a circular object under the action
of this group is shown for various time steps.
\begin{figure}[ht]
\centering
\subfigure[]{
\includegraphics[scale=.2]{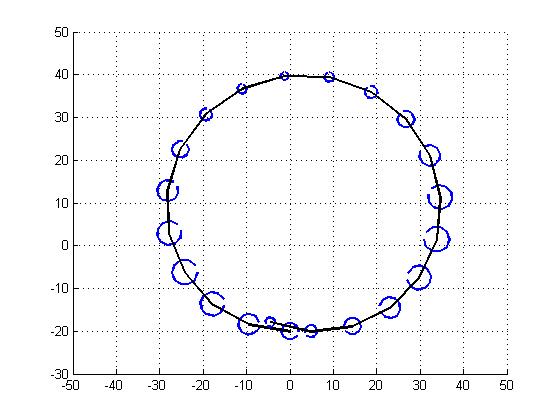}
\label{ese2_1}
}
\subfigure[]{
\includegraphics[scale=.2]{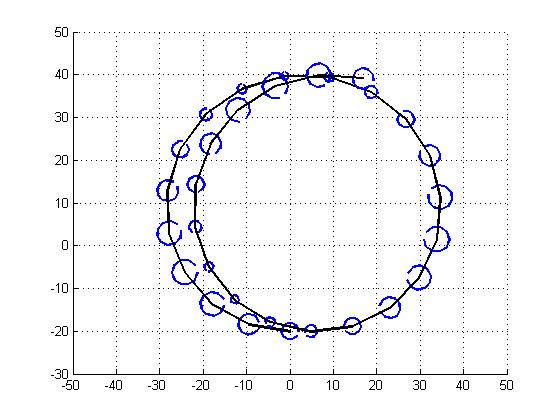}
\label{ese2_2}
}
\subfigure[]{
\includegraphics[scale=.2]{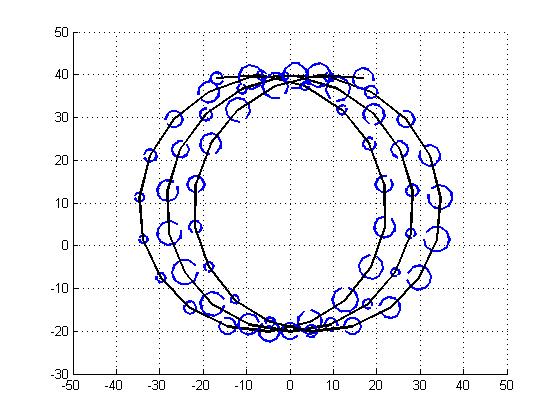}
\label{ese2_3}
}
\caption{
Instances of the reconstructed curve in $SE(3)$ with scaling.
}
\label{recon_curve_ese2}
\end{figure}
Assuming the curve is sampled densely, along with the distance 
measured by Eq.\ref{ese2_metric} we reconstruct the curve using 
MST.  The successfully reconstructed curve, with the values $\alpha=10$ 
and $\beta=1$, is shown in Fig.\ref{recon_curve_ese2}.
Important fact to note here is that the curve presented here is not 
a closed curve.  The algorithm is modified in this case to
take care of the end points.
In fact a simple nearest neighbour search will
also do the job of reconstruction once we give in the initial point.

\subsection{Application to video frame sequencing}
As an application of the curve reconstruction we take up a task of
ordering the frames $\{F_i\}_{i=1,\ldots,N}$ of a video sequence.  In Fig.\ref{unordered_video}
there are sixteen frames of a video sequence.  
\begin{figure}[h]
	\begin{center}
		\includegraphics[scale=.5]{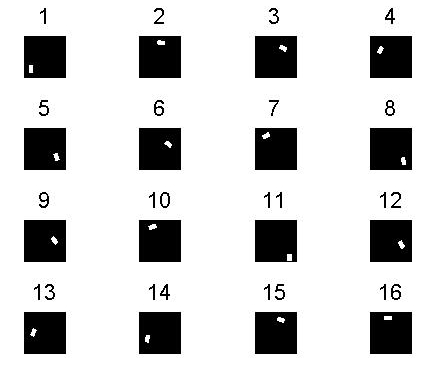}
	\end{center}
	\caption{Unordered video frames}
	\label{unordered_video}
\end{figure}
We use the rigid euclidean motion of an object in the frames as a clue
for re-ordering the frames.
Let us assume that the
object under observation is masked by a rectangle and it is segmented out of
the frames.  We also assume that the motion of the object is the
rigid body euclidean motion in $\mathbb{R}^2$.  
Further let the video 
frames from the sequence form a dense sample set of the motion curve.
As discussed in section \ref{metric_emg} we calculate the distances 
between frames as the distance between elements of $SE(2)$.
Although we do not focus on how to estimate the rotations we 
give a very primitive looking argument below to estimate the
distances between two frames.  And it turns out that the
estimates are good enough in this case to reconstruct the curve.
But in general we use the $[\theta,x,y]$ as the element
of $SE(2)$ and we assume that we have an oracle to give
these frame coordinates to the algorithm.

The euclidean distances between the means found out from the
relative positions of the rectangle is the first part of the
distance metric.
Next we estimate the rotation angle of the object with respect 
to a fixed inertial frame.
\begin{figure}[h]
	\begin{center}
		\includegraphics[scale=.5]{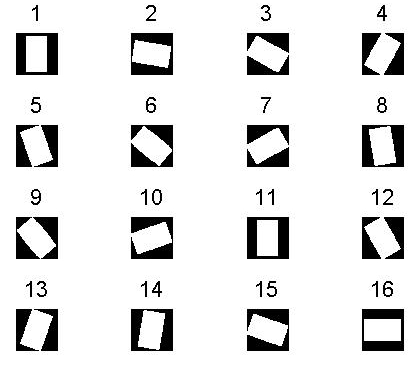}
	\end{center}
	\caption{Mean cancellation and rotation estimation}
	\label{rot_est}
\end{figure}
For this purpose first we register the objects with their means.
An observation reveals that if we overlap the registered rectangles
the area of the overlapping region provides a good estimate of
the rotation angle.  In fact as shown in Fig. for 
$\theta > \arctan(\frac{b}{a})$, the overlapped area is $\frac{a^2}{\sin{\theta}}$,
where $a$ is the shorter side of the rectangle.  Which clearly indicates
as $\theta$ increase the overlapping area decreases upto $\theta=\pi/2$.
For calculating the area we count the number of lattice points(pixels) inside
the overlapping regions.
\begin{figure}[h]
	\begin{center}
		\includegraphics[scale=.5]{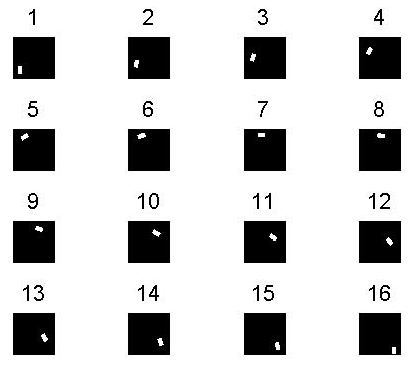}
	\end{center}
	\caption{Ordered video frames}
	\label{ordered_video}
\end{figure}
Finally with the estimate for $\theta$ combined with the euclidean distance
between means give the $d^*(F_1,F_2)$.
Using sequential search with known initial frame we re-order the frames see 
Fig.\ref{ordered_video}.
Even if we do not know about the initial frame, MST computes the correct
connections of the frames and gives a correct ordering upto end points.

Let us reconsider the distance metric on $SE(2)$ given
by Eqn.\ref{se2_metric}.  If we scale the three axis properly the
problem of curve reconstruction in $SE(2)$ reduces to the problem
of curve reconstruction in $\mathbb{R}^3$ and we may use
all the non-uniform sampling schemes and voronoi diagram based
reconstruction algorithms.  As an example we have used
NN-CRUST to reconstruct the curve above in the motion sequence
and we get the correct ordering as expected.

\section{Conclusion}
We showed that the MST gives the correct geodesic polygonal
approximation to the smooth, closed and simple curves in riemannian manifolds
if the sample is dense enough and we work inside the
injectivity radius.  We have worked out a conservative
bound on the uniform sampling of the curve.  The effect of
local topological behaviour of the underlying manifold 
was clearly identified and 
resolved by working inside the injectivity radius.
In general the scheme
works for the smooth arcs with endpoints also.  We have 
presented simulations for successfully
reconstructed curves in $SE(2)$ and $SE(3)$.  We have also
shown the applications of the combinatorial curve reconstruction 
for ordering motion frames in graphics and robotics.

If we work inside the  injectivity radius of the underlying manifold
we have taken care of the topological changes but to take care of geometric
changes we need to work inside the convexity radius as prescribed in
\cite{leibon00}.
We believe that the results of non uniform sampling for curves 
in $\mathbb{R}^n$
are transferable to the curves in riemannian manifold with careful modifications.
As an extension to this work
we would like to work out the necessary proofs and carry out simulations 
for supporting our belief.


%

\appendices
\section{Exponential and Logarithmic maps}
\label{ap1}
	\begin{res}
		Given $[\omega] \in so(3)$,
		\begin{equation}\exp[\omega]=I + \frac{\sin\Vert \omega \Vert}{\Vert \omega \Vert}\cdot[\omega]+
		\frac{1-\cos\Vert \omega \Vert}{\Vert \omega \Vert^2}\cdot [\omega]^2
		\end{equation}
	\end{res}
	
	\begin{res}
		Let $(\omega,v)\in se(3)$. Then 
		\begin{equation}
			\exp\left[\begin{array}{cc} 
			[\omega] & v\\
			0 & 0 \end{array}\right] = 	
			\left[\begin{array}{cc}
			\exp[\omega] & Av\\
			0 & 1 \end{array}\right]	
		\end{equation}
		where 
		\[
			A = I+\frac{1-\cos\Vert\omega\Vert}{\Vert\omega\Vert^2}\cdot [\omega]+
			\frac{\Vert\omega\Vert-\sin\Vert\omega\Vert}{\Vert\omega\Vert^3}\cdot[\omega]^2
		\]
	\end{res}

	\begin{res}
		Given $\theta \in SO(3)$ such that $Tr(\theta)\neq -1$. Then 
		\begin{equation}
			\log(\theta) = \frac{\phi}{2 \sin\phi}(\theta-\theta^T)		
		\end{equation}
		where $\phi$ satisfies $1+2 \cos\phi = Tr(\theta)$, $|\phi|<\pi$.  Further more,
		$\Vert \log\theta \Vert^2=\phi^2$.
	\end{res}
	
	\begin{res}
		Suppose $\theta \in SO(3)$ such that $Tr(\theta)\neq -1$, and let $b\in \mathbb{R}^3$.  
		Then
		\begin{equation}
			\log\left[\begin{array}{cc} 
			\theta & b\\
			0 & 1 \end{array}\right] = 	
			\left[\begin{array}{cc}
			[\omega] & A^{-1}b\\
			0 & 0 \end{array}\right]	
		\end{equation}		
		where $[\omega]=\log\theta$, and 
		\[
			A^{-1}= I-\frac{1}{2}\cdot[\omega]+\frac{2\sin\Vert\omega\Vert-\Vert\omega\Vert(1+\cos\Vert\omega\Vert)}
			{2\Vert\omega\Vert^2\sin\Vert\omega\Vert}\cdot[\omega]^2
		\]		
		
	\end{res}

	\begin{res}
		Let $\theta_1,\theta_2\in SO(3)$.  Then the distance $L=d(\theta_1,\theta_2)$
		induced by the standard bi-invariant metric on $SO(3)$ is 
		\begin{equation} d(\theta_1,\theta_2)=\Vert \log(\theta_1^{-1} \theta_2)\Vert\end{equation}
		where $\Vert\cdot \Vert$ denotes the standard Euclidean norm.
	\end{res}	
	\begin{res}
		Let $X_1=(\theta_1,b_1)$ and $X_2=(\theta_2,b_2)$ be two points in $SE(3)$.  Then the
		distance $L=d(X_1,X_2)$ induced by the scale dependent left-invariant metric on $SE(3)$
		is 
		\begin{equation}d(X_1,X_2)=\sqrt{c\Vert \log(\theta_1^{-1} \theta_2) \Vert^2+d\Vert b_2-b_1 \Vert^2}\end{equation}
		where $\Vert\cdot \Vert$ denotes the Euclidean norm.
	\end{res}


\ifCLASSOPTIONcompsoc
  \section*{Acknowledgments}
\else
  \section*{Acknowledgment}
\fi

The authors would like to acknowledge Prof. Gautam Dutta 
for discussions on the proof of the results in this
article.  The authors would also like to thank the
resource center DAIICT for providing references
needed for the work carried out.  
We acknowledge INRIA, Gamma researcher's team, 
\url{http://www-roc.inria.fr/gamma/gamma/disclaimer.php}, 
for their 3D-mesh files which we have used for 
simulations.
\ifCLASSOPTIONcaptionsoff
  \newpage
\fi



\bibliographystyle{IEEEtran}
\bibliography{IEEEabrv,ref}
\end{document}